\newtheorem{theorem}{Theorem}
\newtheorem{corollary}{Corollary}
\newtheorem{definition}{Definition}
\newtheorem{lemma}{Lemma}
\newenvironment{proof}[1][Proof]{\noindent\textbf{#1.} }{\ \rule{0.5em}{0.5em}}
\begin{document}

\title{\vspace{-0.5cm}Binary Search in Graphs Revisited\thanks{Partially supported by the EPSRC grants EP/P020372/1, EP/P02002X/1, EP/L011018/1, and by the ISF grant 2021296.}}
\author{Argyrios Deligkas\thanks{Department of Computer Science, University of Liverpool, UK.
Email: \texttt{argyrios.deligkas@liverpool.ac.uk}} 
\and George B.~Mertzios\thanks{Department of Computer Science, Durham University, UK. 
Email: \texttt{george.mertzios@durham.ac.uk}} 
\and Paul G. Spirakis\thanks{Department of Computer Science, University of Liverpool, UK, and University of Patras, Greece. 
Email: \texttt{p.spirakis@liverpool.ac.uk}}}
\date{\vspace{-1.0cm}}
\maketitle

\begin{abstract}
In the classical binary search in a path the aim is to detect an unknown target by asking as few queries as possible, where each query reveals the direction to the target. This binary search algorithm has been recently extended by [Emamjomeh-Zadeh et al., \emph{STOC}, 2016] to the problem of detecting a target in an arbitrary graph. Similarly to the classical case in the path, the algorithm of Emamjomeh-Zadeh et al.~maintains a candidates' set for the target, while each query asks an appropriately chosen vertex-- the ``median''--which minimises a potential $\Phi$ among the vertices of the candidates' set. In this paper we address three open questions posed by Emamjomeh-Zadeh et al., namely (a)~detecting a target when the query response is a direction to an \emph{approximately shortest path} to the target, (b)~detecting a target when querying a vertex that is an \emph{approximate median} of the current candidates' set (instead of an exact one), and (c)~detecting \emph{multiple targets}, for which to the best of our knowledge no progress has been made so far. We resolve questions (a) and (b) by providing appropriate upper and lower bounds, as well as a new potential $\Gamma$ that guarantees efficient target detection even by querying an approximate median each time. With respect to (c), we initiate a systematic study for detecting two targets in graphs and we identify sufficient conditions on the queries that allow for strong (linear) lower bounds and strong (polylogarithmic) upper bounds for the number of queries. All of our positive results can be derived using our new potential $\Gamma$ that allows querying approximate medians.\newline

\textbf{Keywords:} binary search, graph, approximate query, probabilistic algorithm, lower bound.
\end{abstract}

\section{Introduction}

The classical binary search algorithm detects an unknown target (or
``treasure'') $t$ on a path with $n$
vertices by asking at most $\log n$ queries to an oracle which always
returns the direction from the queried vertex to $t$. To achieve this upper
bound on the number of queries, the algorithm maintains a set of candidates
for the place of $t$; this set is always a sub-path, and initially it is the
whole path. Then, at every iteration, the algorithm queries the middle
vertex (``median'') of this candidates' set
and, using the response of the query, it excludes either the left or the
right half of the set. This way of searching for a target in a path can be
naturally extended to the case where $t$ lies on an $n$-vertex tree, again
by asking at most $\log n$ queries that reveal the direction in the (unique)
path to $t$~\cite{OP06}. The principle of the binary search algorithm on
trees is based on the same idea as in the case of a path: for every tree
there exists a separator vertex such that each of its subtrees contains at
most half of the vertices of the tree~\cite{Jordan1869}, which can be also
efficiently computed.

Due to its prevalent nature in numerous applications, the problem of
detecting an unknown target in an arbitrary graph or, more generally in a
search space, has attracted many research attempts from different
viewpoints. Only recently the binary search algorithm with $\log n$
direction queries has been extended to arbitrary graphs by Emamjomeh-Zadeh
et al.~\cite{EKS16}. In this case there may exist multiple paths, or even
multiple shortest paths form the queried vertex to $t$. The direction query
considered in~\cite{EKS16} either returns that the queried vertex $q$ is the
sought target $t$, or it returns an arbitrary direction from $q$ to $t$,
i.e.~an arbitrary edge incident to $q$ which lies on a shortest path from $q$
to $t$. The main idea of this algorithm follows again the same principle as
for paths and trees: it always queries a vertex that is the
``median'' of the current candidates' set
and any response to the query is enough to shrink the size of the
candidates' set by a factor of at least $2$. Defining what the
``median'' is in the case of general graphs
now becomes more tricky: Emamjomeh-Zadeh et al.~\cite{EKS16} define the
median of a set $S$ as the vertex $q$ that minimizes a potential function $%
\Phi$, namely the sum of the distances from $q$ to all vertices of $S$.

Apart from searching for upper bounds on the number of queries needed to
detect a target $t$ in graphs, another point of interest is to derive
algorithms which, given a graph $G$, compute the \emph{optimal} number of
queries needed to detect an unknown target in $G$ (in the worst case). This
line of research was initiated in~\cite{LS85} where the authors studied
directed acyclic graphs (DAGs). Although computing a query-optimal algorithm
is known to be NP-hard~on general graphs~\cite{CDKL, D08, LY98}, there exist
efficient algorithms for trees; after a sequence of papers~\cite{IRV88,
LY01, S89, BFN99, MOW08}, linear time algorithms were found in~\cite{OP06,
MOW08}. Different models with queries of non-uniform costs or with a
probability distribution over the target locations were studied in~\cite%
{CJLM, CJLV, LMP, DKUZ}.

A different line of research is to search for upper bounds and
information-theoretic bounds on the number of queries needed to detect a
target $t$, assuming that the queries incorporate some degree of
``noise''. In one of the variations of this
model~\cite{EKS16,BH08,FRPU}, each query independently returns with
probability $p>\frac{1}{2}$ a direction to a shortest path from the queried
vertex $q$ to the target, and with probability $1-p$ an arbitrary edge
(possibly adversarially chosen) incident to $q$. The study of this problem
was initiated in~\cite{FRPU}, where $\Omega (\log n)$ and $O(\log n)$ bounds
on the number of queries were established for a path with $n$ vertices. This
information-theoretic lower bound of~\cite{FRPU} was matched by an improved
upper bound in~\cite{BH08}. The same matching bound was extended to general
graphs in~\cite{EKS16}.

In a further ``noisy'' variation of binary
search, every vertex $v$ of the graph is assigned a fixed edge incident to $%
v $ (also called the ``advice'' at $v$).
Then, for a fraction $p>\frac{1}{2}$ of the vertices, the advice directs to
a shortest path towards $t$, while for the rest of the vertices the advice
is arbitrary, i.e.~potentially misleading or adversarially chosen~\cite%
{BKR16, bo16}. This problem setting is motivated by the situation of a tourist
driving a car in an unknown country that was hit by a hurricane which
resulted in some fraction of road-signs being turned in an arbitrary and
unrecognizable way. The question now becomes whether it is still possible to
navigate through such a disturbed and misleading environment and to detect
the unknown target by asking only few queries (i.e.~taking advice only from
a few road-signs). It turns out that, apart from its obvious relevance to
data structure search, this problem also appears in artificial intelligence
as it can model searching using unreliable heuristics~\cite{BKR16,N71,P84}.
Moreover this problem also finds applications outside computer science, such
as in navigation issues in the context of collaborative transport by ants~\cite{Fonio16}.

Another way of incorporating some ``noise'' in the query responses, while trying to detect a target, is to have 
\emph{multiple targets} hidden in the graph. Even if there exist only two unknown
targets $t_{1}$ and $t_{2}$, the response of each query is potentially
confusing even if \emph{every} query correctly directs to a shortest path
from the queried vertex to one of the targets. The reason of confusion is
that now a detecting algorithm does not know to \emph{which} of the hidden
targets each query directs. In the context of the above example of a tourist
driving a car in an unknown country, imagine there are two main football
teams, each having its own stadium. A fraction $0<p_{1}<1$ of the population
supports the first team and a fraction $p_{2}=1-p_{1}$ the second one, while
the supporters of each team are evenly distributed across the country. The
driver can now ask questions of the type ``where is the
football stadium?'' to random local people along the way,
in an attempt to visit \emph{both} stadiums. Although every response will be
honest, the driver can never be sure which of the two stadiums the local
person meant. Can the tourist still detect both stadiums quickly enough? To
the best of our knowledge the problem of detecting multiple targets in
graphs has not been studied so far; this is one of the main topics of the
present paper.

The problem of detecting a target within a graph can be seen as a special
case of a two-player game introduced by Renyi~\cite{Renyi} and rediscovered
by Ulam~\cite{Ulam}. This game does not necessarily involve graphs: the
first player seeks to detect an element known to the second player in some
search space with $n$ elements. To this end, the first player may ask
arbitrary yes/no questions and the second player replies to them honestly or
not (according to the details of each specific model). Pelc~\cite{Pelc02}
gives a detailed taxonomy for this kind of games. \emph{Group testing} is a
sub-category of these games, where the aim is to detect all unknown objects
in a search space (not necessarily a graph)~\cite{DH93}. Thus, group testing
is related to the problem of detecting multiple targets in graphs, which we
study in this paper.

It is worth noting that techniques similar to~\cite{EKS16} were used to 
derive frameworks for robust interactive learning~\cite{EK17} and for 
adaptive hierarchical clustering~\cite{EK18}.

\subsection{Our contribution\label{our-results-subsec}}

In this paper we systematically investigate the problem of detecting one or
multiple hidden targets in a graph. Our work is driven by the open questions
posed by the recent paper of Emamjomeh-Zadeh et al.~\cite{EKS16} which dealt
with the detection of a single target with and without ``noise''. 
More specifically, Emamjomeh-Zadeh et al.~\cite%
{EKS16} asked for further fundamental generalizations of the model which
would be of interest, namely (a)~detecting a single target when the query
response is a direction to an \emph{approximately shortest path},
(b)~detecting a single target when querying a vertex that is an 
\emph{approximate median} of the current candidates' set $S$ (instead of an exact
one), and (c)~detecting \emph{multiple targets}, for which to the best of our
knowledge no progress has been made so far.

We resolve question (a) in Section~\ref{approx-shortest-subsec} by proving
that \emph{any} algorithm requires $\Omega (n)$ queries to detect a single
target $t$, assuming that a query directs to a path with an approximately
shortest length to $t$. Our results hold essentially for any approximation
guarantee, i.e.~for $1$-additive and for $(1+\varepsilon )$-multiplicative
approximations.

Regarding question (b), we first prove in Section~\ref{apx-median-subsec}
that, for any constant $0<\varepsilon <1$, the algorithm of~\cite{EKS16}
requires at least $\Omega(\sqrt{n})$ queries when we query each time an 
$(1+\varepsilon )$-approximate median (i.e.~an $(1+\varepsilon )$-approximate
minimizer of the potential $\Phi $ over the candidates' set $S$). Second, to
resolve this lower bound, we introduce in Section~\ref{our-potential-subsec}
a new potential $\Gamma $. This new potential can be efficiently computed
and, in addition, guarantees that, for any constant $0 \leq \varepsilon <1$,
the target $t$ can be detected in $O(\log n)$ queries even when an $%
(1+\varepsilon )$-approximate median (with respect to $\Gamma $) is queried
each time.

Regarding question (c), we initiate in Section~\ref{sec:two-targets} the
study for detecting multiple targets on graphs by focusing mainly to the case
of two targets $t_{1}$ and $t_{2}$. We assume throughout that every query
provides a correct answer, in the sense that it always returns a direction
to a shortest path from the queried vertex either to $t_{1}$ or to $t_{2}$.
The ``noise'' in this case is that the
algorithm does not know whether a query is returning a direction to $t_{1}$
or to $t_{2}$. Initially we observe in Section~\ref{sec:two-targets} that 
\emph{any} algorithm requires $\frac{n}{2}-1$ (resp.~$n-2$) queries in the
worst case to detect one target (resp.~both targets) if each query directs
adversarially to one of the two targets. Hence, in the remainder of Section~%
\ref{sec:two-targets}, we consider the case where each query independently
directs to the first target $t_{1}$ with a constant probability $p_{1}$ and
to the second target $t_{2}$ with probability $p_{2}=1-p_{1}$. For the case
of trees, we prove in Section~\ref{sec:two-targets} that both targets can be
detected with high probability within $O(\log^2 n)$ queries.

For general graphs, we distinguish between \emph{biased} queries ($%
p_{1}>p_{2}$) in Section~\ref{biased-subsec} and \emph{unbiased} queries ($%
p_{1}=p_{2}=\frac{1}{2}$) in Section~\ref{unbiased-subsec}. 
For biased queries we prove positive results, while for unbiased queries we derive strong negative results. 
For biased queries, first we observe that we can utilize 
the algorithm of Emamjomeh-Zadeh et al.~\cite{EKS16} to detect the 
first target $t_{1}$ with high probability in $O(\log n)$ queries; 
this can be done by considering the queries that direct to $t_{2}$ as ``noise''. 
Thus our objective becomes to detect the target $t_{2}$ in a polylogarithmic number of queries. 
Notice here that we cannot apply the ``noisy'' framework of~\cite{EKS16} to detect the second
target $t_{2}$ , since now the ``noise'' is larger than $\frac{1}{2}$. 
We prove our positive results for biased queries by making the additional assumption that, 
once a query at a vertex $v$ has chosen which target among $\{t_1,t_2\}$ it directs to, 
it returns any of the possible correct answers (i.e.~any of the neighbors $u$ of $v$ 
such that there exists a shortest path from $v$ to the chosen target using the edge $vu$) 
equiprobably and independently from all other queries. 
We derive a probabilistic algorithm that
overcomes this problem and detects the target $t_{2}$ with high probability
in $O(\Delta \log ^{2}n)$ queries, where $\Delta $ is the maximum degree of
a vertex in the graph. Thus, whenever $\Delta =O(poly\log n)$, a
polylogarithmic number of queries suffices to detect $t_{2}$.

In contrast, we prove in Section~\ref{unbiased-subsec} that, for unbiased queries, 
\emph{any} deterministic (possibly adaptive) algorithm that detects at least one
of the targets requires at least $\frac{n}{2}-1$ queries, even in an
unweighted cycle. Extending this lower bound for two targets, we prove that,
assuming $2c\geq 2$ different targets and unbiased queries, \emph{any}
deterministic (possibly adaptive) algorithm requires at least $\frac{n}{2}-c$
queries to detect one of the targets.

Departing from the fact that our best upper bound on the number of biased
queries in Section~\ref{biased-subsec} is not polylogarithmic when the
maximum degree $\Delta $ is not polylogarithmic, we investigate in 
Section~\ref{sec:other-queries} several variations of queries that provide more
informative responses. In Section~\ref{direction-distance-subsec} we turn
our attention to ``direction-distance'' biased queries which return with probability $p_{i}$ both the direction to a
shortest path to $t_{i}$ and the distance between the queried vertex and $%
t_{i}$. In Section~\ref{direction-edge-subsec} we consider another type of a
biased query which combines the classical ``direction'' query and an edge-variation of it. For both
query types of Sections~\ref{direction-distance-subsec} and~\ref%
{direction-edge-subsec} we prove that the second target $t_{2}$ can be
detected with high probability in $O(\log ^{3}n)$ queries. Furthermore, in
Sections~\ref{two-direction-subsec} and~\ref{restricted-queries-subsec} we
investigate two further generalizations of the ``direction'' query which make the target detection problem
trivially hard and trivially easy to solve, respectively.


\subsection{Our Model and Notation\label{our-model-subsec}}

We consider connected, simple, and undirected graphs. A graph $G=(V,E)$,
where $|V|=n$, is given along with a \emph{weight function} $w:E\rightarrow 
\mathbb{R}^{+}$ on its edges; if $w(e)=1$ for every $e\in E$ then $G$ is 
\emph{unweighted}. An edge between two vertices $v$ and $u$ of $G$ is
denoted by $vu$, and in this case $v$ and $u$ are said to be \emph{adjacent}. 
The distance $d(v,u)$ between vertices $v$ and $u$ is the length of a
shortest path between $v$ and $u$ with respect to the weight function $w$.
Since the graphs we consider are undirected, $d(u,v)=d(v,u)$ for every pair
of vertices $v,u$. Unless specified otherwise, all logarithms are taken with
base $2$. Whenever an event happens with probability at least $1-\frac{1}{%
n^{\alpha }}$ for some $\alpha >0$, we say that it happens \emph{with high
probability}.

The \emph{neighborhood} of a vertex $v\in V$ is the set $N(v)=\{u\in V:vu\in
E\}$ of its adjacent vertices. The cardinality of~$N(v)$ is the \emph{degree}~$\deg (v)$ of~$v$. 
The maximum degree among all vertices in $G$ is denoted
by $\Delta (G)$, i.e.~$\Delta (G)=\max \{\deg (v):v\in V\}$. For two
vertices $v$ and $u\in N(v)$ we denote by $N(v,u)=\{x\in
V:d(v,x)=w(vu)+d(u,x)\}$ the set of vertices $x\in V$ for which there exists
a shortest path from $v$ to $x$, starting with the edge $vu$. Note that, in
general, $N(u,v)\neq N(v,u)$. Let $T=\{t_{1},t_{2},\cdots
,t_{|T|}\}\subseteq V$ be a set of (initially unknown) \emph{target vertices}. 
A \emph{direction query} (or simply \emph{query}) at vertex $v\in V$
returns with probability $p_{i}$ a neighbor $u\in N(v)$ such that $t_{i}\in
N(u,v)$, where $\sum_{i=1}^{|T|}p_{i}=1$. If there exist more than one such
vertices $u\in N(v)$ leading to $t_{i}$ via a shortest path, the direction
query returns an arbitrary one among them, i.e.~possibly chosen
adversarially, unless specified otherwise. Moreover, if the queried vertex $%
v $ is equal to one of the targets $t_{i}\in T$, this is revealed by the
query with probability $p_{i}$.

\section{Detecting a Unique Target\label{one-target-sec}}

In this section we consider the case where there is only one unknown target $%
t=t_{1}$, i.e.~$T=\{t\}$. In this case the direction query at vertex $v$
always returns a neighbor $u\in N(v)$ such that $t\in N(v,u)$. For this
problem setting, Emamjomeh-Zadeh et al.~\cite{EKS16} provided a
polynomial-time algorithm which detects the target $t$ in at most $\log n$
direction queries. During its execution, the algorithm of~\cite{EKS16}
maintains a ``candidates' set'' $S\subseteq
V$ such that always $t\in S$, where initially $S=V$. At every iteration the
algorithm computes in polynomial time a vertex $v$ (called the \emph{median}
of $S$) which minimizes a potential $\Phi _{S}(v)$ among all vertices of the
current set $S$. Then it queries a median $v$ of $S$ and it reduces the
candidates' set $S$ to $S\cap N(v,u)$, where $u$ is the vertex returned by
the direction query at $v$. The upper bound $\log n$ of the number of
queries in this algorithm follows by the fact that always $|S\cap
N(v,u)|\leq \frac{|S|}{2}$, whenever $v$ is the median of~$S$.

\subsection{ Bounds for Approximately Shortest Paths\label%
{approx-shortest-subsec}}

We provide lower bounds for both additive and multiplicative approximation
queries. A $c$-\emph{additive approximation query} at vertex $v\in V$
returns a neighbor $u\in N(v)$ such that $w(vu) + d(u,t) \leq d(v,t) + c$.
Similarly, an $(1+\varepsilon)$-\emph{multiplicative approximation query} at
vertex $v\in V$ returns a neighbor $u\in N(v)$ such that $w(vu) + d(u,t)
\leq (1+\varepsilon) \cdot d(v,t)$.

It is not hard to see that in the unweighted clique with $n$ vertices any
algorithm requires in worst case $n-1$ $1$-additive approximation queries to
detect the target $t$. Indeed, in this case $d(v,t)=1$ for every vertex $v
\neq t$, while every vertex $u\notin \{v,t\}$ is a valid response of an 
$1$-additive approximation query at $v$. Since in the case of the unweighted
clique an additive 1-approximation is the same as a multiplicative
2-approximation of the shortest path, it remains unclear whether 
$1$-additive approximation queries allow more efficient algorithms for graphs
with large diameter. In the next theorem we strengthen this result to graphs
with unbounded diameter.

\begin{theorem}
\label{thm:approx-sp-add} Assuming $1$-additive approximation queries, any
algorithm requires at least $n-1$ queries to detect the target $t$, even in
graphs with unbounded diameter.
\end{theorem}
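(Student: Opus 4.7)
The plan is to exhibit, for each $n$, a graph $G_n$ on $n$ vertices with $\mathrm{diam}(G_n)=\Theta(n)$ (hence unbounded as $n\to\infty$) and an adversary strategy that forces every algorithm to make at least $n-1$ queries. I would use a \emph{path blowup}: partition $V$ into $k=\lfloor n/2\rfloor$ blocks $B_1,\dots,B_k$, each of size $2$ (the last of size $3$ if $n$ is odd), write $B_i=\{v_{i,1},v_{i,2}\}$, place an edge inside every block, and insert all edges between every pair of consecutive blocks $B_i,B_{i+1}$. A short distance calculation then yields $d(v_{i,s},v_{j,r})=|i-j|$ when $i\neq j$ and $d(v_{i,s},v_{i,r})=1$ otherwise, so $\mathrm{diam}(G_n)=k-1=\Theta(n)$, which is unbounded.

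The heart of the argument is the following adversary strategy: whenever the algorithm queries a vertex $v\in B_i$, the adversary returns an arbitrary ``twin'' $u\in B_i\setminus\{v\}$. Since every edge has weight $1$, the requirement that $u$ be a valid $1$-additive approximation response for a putative target $t$ reduces to $d(u,t)\le d(v,t)$. I would verify this in the only two possible cases: for $t\in B_j$ with $j\neq i$ both distances equal $|i-j|$, and for $t\in B_i\setminus\{v\}$ one has $d(u,t)\le 1=d(v,t)$. Hence the response is consistent with \emph{every} target in $V\setminus\{v\}$, and the only vertex the algorithm may delete from its candidate set because of this query is the queried vertex $v$ itself (excluded because the oracle did not return ``self'').

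It follows that every query shrinks the candidate set by at most one vertex; since the set starts at size $n$ and must be reduced to a singleton before the algorithm can output the target, at least $n-1$ queries are required. The main obstacle in setting up this argument is the construction itself: the most natural attempt (a clique $K_m$ with a long pendant path attached) only yields a lower bound of the order of $m$, which cannot simultaneously be linear in $n$ and give diameter growing with $n$. The path-blowup construction resolves this because the two vertices inside each block are \emph{metrically indistinguishable} from the rest of the graph, so the symmetry that made the unweighted clique hard in the opening remark is preserved at every point along the long metric direction of $G_n$.
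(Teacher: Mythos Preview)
Your construction is correct and is essentially the same as the paper's: your ``path blowup'' with blocks of size $2$ joined by all cross-edges is exactly the $2\times\lfloor n/2\rfloor$ grid with both diagonals in every cell that the paper uses, and your adversary (return the twin in the same block) coincides with the paper's strategy of answering $u_i$ to a query at $v_i$ and vice versa. Your distance observation $d(v_{i,s},v_{j,r})=|i-j|$ makes the $1$-additive validity check slightly cleaner than the paper's case analysis, but the argument is otherwise identical.
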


\begin{proof}
To prove the theorem we will construct a graph and a strategy for the
adversary such that any algorithm will need $n-1$ queries to locate the
target~$t$. Consider a horizontal $2 \times \frac{n}{2}$ grid graph where we
add the two diagonals in every cell of the grid. Formally, the graph has $%
\frac{n}{2}$ ``top'' vertices $v_1,\ldots,v_{\frac{n}{2}}$ and $\frac{n}{2}$
``bottom'' vertices $u_1,\ldots,u_{\frac{n}{2}}$. For every $i \in
\{1,2,\ldots,\frac{n}{2}-1\}$ we have the edges $v_{i}v_{i+1}, u_{i}u_{i+1},
v_{i}u_{i}, v_{i+1}u_{i+1}, v_{i}u_{i+1}, v_{i+1}u_{i}$.

The strategy of the adversary is as follows. If the algorithm queries a top
vertex $v_{i}$, then the query returns the bottom vertex $u_{i}$. Similarly,
if the algorithm queries a bottom vertex $u_{i}$, then the query returns the
top vertex $v_{i}$. Observe that, in every case, the query answer lies on a
path of length at most one more than a shortest path from the queried vertex
and the target $t$. To see this assume that the algorithm queries a top
vertex $v_i$; the case where the queried vertex is a bottom vertex $u_{i}$
is symmetric.

If $t=u_i$, then the edge $v_{i}u_{i}$ clearly lies on the shortest path
between $v_{i}$ and $t$. If $t=u_{j}$, where $j\neq i$, then the shortest
path uses one of the diagonal edges incident to $v_{i}$. In this case the
edge $v_{i}u_{i}$ leads to a path with length one more than the shortest
one. Finally, if $t=v_{j}$, where $j\neq i$, then the shortest path has
length $|j-i|$ and uses either the edge $v_iv_{i-1}$ or the edge $v_iv_{i+1}$%
. In both cases the edge $v_{i}u_{i}$ lies on the path from $v_{i}$ to $T$
with length $|j-i|+1$ which uses the edge $v_{i}u_{i}$ and one of the
diagonal edges $u_{i+1}v_{i-1}$ and $u_{i+1}v_{i+1}$.

Hence, after each query at a vertex different than $t$, the algorithm can
not obtain any information about the position of $t$ 
(except the fact that it is not the queried node). Thus, in the worst
case the algorithm needs to make $n-1$ queries to detect $t$.
\end{proof}

\medskip

In the next theorem we extend Theorem~\ref{thm:approx-sp-add} by showing a
lower bound of $n\cdot\frac{\varepsilon}{4}$ queries when we assume $%
(1+\varepsilon)$-multiplicative approximation queries.

\begin{theorem}
\label{thm:approx-sp-mul} Let $\varepsilon >0$. Assuming $(1+\varepsilon )$-multiplicative approximation queries, any algorithm requires at least at
least $n\cdot \frac{\varepsilon }{4}$ queries to detect the target $t$.
\end{theorem}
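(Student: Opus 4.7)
The plan is to reuse the $2 \times \frac{n}{2}$ grid-with-diagonals $G$ constructed in the proof of Theorem~\ref{thm:approx-sp-add}, together with the same adversarial policy: whenever the algorithm queries a top vertex $v_i \neq t$, return the bottom vertex $u_i$, and symmetrically for queries at bottom vertices. All that needs to change is the accounting of how much information such a vertical answer carries when only a $(1+\varepsilon)$-multiplicative guarantee is required.

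First I would compute the distances in $G$ explicitly. Using the diagonals one checks that $d(v_i,v_j) = d(u_i,u_j) = |i-j|$ and $d(v_i,u_j) = \max\{1,|i-j|\}$. Hence a query at $v_i$ answered with $u_i$ corresponds to a $v_i$-to-$t$ path of length $1 + d(u_i,t)$, while the shortest $v_i$-$t$ path has length $d(v_i,t)$. A brief case distinction on whether $t$ is $u_i$, another bottom vertex $u_j$, or a top vertex $v_j$ with $j \neq i$ then shows that the ratio of these two lengths is at most $1 + 1/d(v_i,t)$, and is exactly $1$ in the special case $t = u_i$.

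Next I would conclude that the vertical answer is a legitimate $(1+\varepsilon)$-approximate direction whenever $d(v_i,t) \geq 1/\varepsilon$, so the only candidates that a query at $v_i$ can remove from the candidates' set are $v_i$ itself and the vertices $v_j$ and $u_j$ with $0 < |i-j| < 1/\varepsilon$, which amount to at most $4/\varepsilon$ vertices per query (by symmetry the same count handles queries at bottom vertices). Since the algorithm cannot stop until the candidates' set has been reduced to a single vertex, at least $(n-1)\cdot \varepsilon/4$ queries are needed, yielding the claimed $n\cdot\varepsilon/4$ lower bound up to a negligible additive constant.

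The main obstacle is the case analysis verifying that $1 + d(u_i,t) \leq (1+\varepsilon)\cdot d(v_i,t)$ precisely when $d(v_i,t) \geq 1/\varepsilon$, and checking that boundary effects at the two ends of the grid do not harm the $4/\varepsilon$ per-query elimination bound, so that the constant $4$ in the final statement survives.
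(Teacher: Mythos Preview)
Your proposal is correct and follows essentially the same approach as the paper: the same $2\times\frac{n}{2}$ grid-with-diagonals, the same vertical-answer adversary, and the same observation that the vertical answer is a valid $(1+\varepsilon)$-approximate response whenever the target is at distance at least $1/\varepsilon$ from the queried vertex, so each query rules out only $O(1/\varepsilon)$ candidates. The paper phrases the accounting via ``marking'' the columns $j$ with $|j-i|<1/\varepsilon$ (at most $2/\varepsilon$ columns, hence $n\varepsilon/4$ queries for the $n/2$ columns), whereas you count eliminated vertices directly, but the argument is the same.
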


\begin{proof}
For the proof we use the same construction from Theorem~\ref{thm:approx-sp-add}, 
however the adversary we use here is slightly modified.
Assume that the distance between the queried vertex and the target $t$ is $d$. 
If $d+1 \leq (1+\varepsilon)\cdot d$, or equivalently, if $d \geq \frac{1}{\varepsilon}$, 
the adversary can respond in the same way as in Theorem~\ref{thm:approx-sp-add}.

Overall, the adversary proceeds as follows. Initially all vertices are
unmarked. Whenever the algorithm queries a vertex $v_{i}$ (resp.~$u_{i}$),
the adversary marks the vertices $\{v_{j},u_{j}:|j-i| < \frac{1}{\varepsilon}%
\}$ in order to determine the query response. If at least one unmarked
vertex remains in the graph, then the query returns (similarly to Theorem~%
\ref{thm:approx-sp-add}) vertex $u_{i}$ (resp.~$v_{i}$). In this case the
adversary can place the target $t$ at any currently unmarked vertex. By
doing so, the adversary ensures that the distance between $t$ and any of the
previously queried vertices is at least $\frac{1}{\varepsilon}$. If all
vertices of the graph have been marked, then the adversary places the target 
$t$ at one of the last marked vertices and in this case the query returns a
vertex on the shortest path between $t$ and the queried vertex.

With the above strategy, any algorithm needs to continue querying vertices
until there is no unmarked vertex left. Thus, since at every query the
adversary marks at most $2/\varepsilon$ new vertices, any algorithm needs to
perform at least $\frac{n/2}{2/\varepsilon}=n\cdot\frac{\varepsilon}{4}$
queries.
\end{proof}


\subsection{Lower Bound for querying the Approximate Median\label%
{apx-median-subsec}}

The potential $\Phi _{S}:V\rightarrow \mathbb{R}^{+}$ of~\cite{EKS16}, where 
$S\subseteq V$, is defined as follows. For any set $S\subseteq V$ and any
vertex $v\in V$, the potential of $v$ is $\Phi _{S}(v)=\sum_{u\in S}d(v,u)$.
A vertex $x\in V$ is an $(1+\varepsilon )$-approximate minimizer for the
potential $\Phi $ over a set $S$ (i.e.~an $(1+\varepsilon )$-median of~$S$)
if $\Phi _{S}(x)\leq (1+\varepsilon )\min_{v\in V}\Phi _{S}(v)$, where $%
\varepsilon >0$. We prove that an algorithm querying at each iteration
always an $(1+\varepsilon )$-median of the current candidates' set~$S$ needs 
$\Omega (\sqrt{n})$ queries.

\begin{theorem}
\label{thm:approx-eks-lb} Let $\varepsilon >0$. If the algorithm of~\cite{EKS16} 
queries at each iteration an $(1+\varepsilon )$-median for the
potential function $\Phi $, then at least $\Omega(\sqrt{n})$ queries are
required to detect the target $t$ in a graph $G$ with $n$ vertices, even if
the graph $G$ is a tree.
\end{theorem}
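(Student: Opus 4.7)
The plan is to exhibit, for every $\varepsilon>0$ and all sufficiently large $n$, a tree $G$ on $n$ vertices together with an adversary that forces the algorithm of~\cite{EKS16} to perform $\Omega(\sqrt{n})$ queries, under the assumption that at each iteration the algorithm is compelled to query some $(1+\varepsilon)$-approximate minimizer of $\Phi$. I take $G$ to be the \emph{spider} with a center $c$ and $k=\Theta(\sqrt{n})$ legs, where leg $j$ is a path $c, u^{(j)}_1, u^{(j)}_2, \ldots, u^{(j)}_k$. The reason the spider works is that the potential $\Phi$ is almost flat near $c$: for any symmetric subset of the legs the first-leg vertices $u^{(j)}_1$ have essentially the same potential as $c$, even though a query at $u^{(j)}_1$ reveals far less information than a query at $c$ would.

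The single technical step I have to carry out is the following direct computation. For $A\subseteq\{1,\ldots,k\}$ with $r=|A|$, set $S_A=\{c\}\cup\bigcup_{j\in A}\mathrm{leg}(j)$. By summing the distances in the spider I will show
\[
\Phi_{S_A}(c)=\tfrac{r\,k(k+1)}{2},\qquad \Phi_{S_A}(u^{(j)}_1)-\Phi_{S_A}(c)=(r-2)k+1\text{ for every }j\in A,
\]
and that for $r\ge 2$ the center $c$ is the true minimizer of $\Phi_{S_A}$ over all of $V$ (by the symmetry of the $r$ active legs together with a one-dimensional convexity check along each leg). Dividing, $\Phi_{S_A}(u^{(j)}_1)/\Phi_{S_A}(c)=1+O(1/k)$ uniformly in $r\ge 2$, so once $k\ge k_0(\varepsilon)$ each $u^{(j)}_1$ with $j\in A$ is a legal $(1+\varepsilon)$-approximate median of $S_A$. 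I expect this uniform bound, together with the verification that $c$ is really the minimum, to be the main but essentially mechanical obstacle.

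Given the lemma, the adversary plays as follows. Starting from $A=\{1,\ldots,k\}$, in each round it chooses some $j\in A$, offers the algorithm $u^{(j)}_1$ as the approximate median to be queried, and answers the query with the vertex $c$. This response is consistent because $N(u^{(j)}_1,c)=\{c\}\cup\bigcup_{j'\neq j}\mathrm{leg}(j')$, so every vertex in $S_{A\setminus\{j\}}$ is still a valid candidate target; the new candidates' set is exactly $S_{A\setminus\{j\}}$, having lost only the $k$ vertices of leg~$j$.

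The counting step closes the argument. The move above is legal whenever $|A|\ge 2$ and reduces $|A|$ by exactly one, so after $k-2$ rounds we still have $|A|=2$ with $|S_A|=2k+1$ vertices consistent with the transcript. Hence the algorithm cannot yet have identified the target, forcing at least $k-2=\Omega(\sqrt{n})$ queries. Since $G$ is a tree this establishes the theorem in its strongest form.
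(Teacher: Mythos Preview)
Your proof is correct and follows essentially the same approach as the paper: both construct a spider with $\Theta(\sqrt{n})$ legs of length $\Theta(\sqrt{n})$, verify by direct computation that the first vertex of each surviving leg is a $(1+\varepsilon)$-approximate minimizer of $\Phi$ (while the center is the exact minimizer), and observe that querying such a vertex prunes only that single leg. The paper fixes the target at the center $c$ from the outset, whereas you phrase the argument via an adversary that keeps the target location open; your candidate sets $S_A$ are in fact the exact sets produced by the update rule $S\leftarrow S\cap N(v,u)$, which is slightly cleaner than the paper's sets $V_{-k}$ (those retain the already-queried vertices $v_{i,1}$), but the two arguments are otherwise the same.
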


\begin{proof}
We will construct a graph $G=(V,E)$ with $n+1$ vertices such that $\Omega(%
\sqrt{n})$ queries are needed to locate the target. The graph $G$ will be a
tree with a unique vertex of degree greater than 2, i.e.~$G$ is a tree that
resembles the structure of a star. Formally, $G$ consists of $\sqrt{n}$
paths of length $\sqrt{n}$ each, where all these paths have a vertex $v_{0}$
as a common endpoint. Let $P_i = (v_0, v_{i,1}, v_{i,2}, \ldots, v_{i,\sqrt{n%
}-1}, v_{i,\sqrt{n}})$ be the $i$th path of $G$. For every $i\leq \sqrt{n}$
denote by $Q_{i} = \{v_{i,2},v_{i,3}, \ldots, v_{i,\sqrt{n}}\}$ be the set
of vertices of $P_{i}$ without $v_{0}$ and~$v_{i,1}$. Furthermore, for every 
$k \in \{0,1,\ldots, \sqrt{n}\}$ define $V_{-k} = V \setminus
(\bigcup_{1\leq i\leq k}Q_{i})$ to be the set of vertices left in the graph
by keeping only the first edge from each path $P_{i}$, where $i\leq k$. Note
by definition that $V_{-0}=V$.

Now recall that the algorithm of~\cite{EKS16} queries at each step an \emph{arbitrary median} 
for the potential function $\Phi$. 
To prove the theorem, it suffices to show that, in the graph $G$ that we constructed above, 
the slight modification of the algorithm of~\cite{EKS16} in which we query at each step 
an \emph{arbitrary $(1+\varepsilon )$-median} for the potential function $\Phi$, 
we need at least $\Omega(\sqrt{n})$ queries to detect the target in worst case. 
To this end, consider the target being vertex $v_{0}$. 
The main idea for the remainder of the proof is as follows. 
At every iteration the central vertex $v_{0}$ and all its
neighbors, who have not yet been queried, are $(1+\varepsilon )$-medians,
while $v_{0}$ is the exact median for the potential $\Phi $ of~\cite{EKS16}.
For every $k\in \{0,1,\ldots ,\sqrt{n}\}$ we have%
\begin{align}
\Phi _{V_{-k}}(v_{0})& =k+(\sqrt{n}-k)\sum_{j=1}^{\sqrt{n}}j  \notag \\
& =k+\frac{1}{2}(\sqrt{n}-k)(n+\sqrt{n}).  \label{eq1}
\end{align}

Next we compute $\Phi_{V_{-k}}(v_{p,1})$ for every $p > k$. Note that $%
d(v_{p,1},v_{i,1})=2$ for every $i \leq k$, and thus $\sum_{i=1}^k
d(v_{p,1},v_{i,1}) = 2k$. Furthermore, for the vertices on the path $P_p$ we
have 
\begin{equation*}
\sum_{i=2}^{\sqrt{n}} d(v_{p,1},v_{p,i}) = \sum_{i=1}^{\sqrt{n}-1}i = \frac{1%
}{2}(n - \sqrt{n}).
\end{equation*}
Finally, denote by $R=V_{-k} \setminus \{v_{p,1}, v_{p,2}, \ldots, v_{p,\sqrt{n}}\}$ 
the remaining of the vertices in $V_{-k}$. Then we have%

\begin{align*}
\sum_{u \in R} d(v_{p,1},u) & = 1 + (\sqrt{n}-k-1) \cdot \sum_{j=2}^{\sqrt{n}%
+1} j \\
& = 1 + \frac{1}{2}(\sqrt{n}-k-1)(n+3\sqrt{n}).
\end{align*}
Therefore, it follows that 
\begin{align}  \label{eq2}
\Phi_{V_{-k}}(v_{p,1}) = 2k + \frac{1}{2}(n - \sqrt{n}) + 1 + \frac{1}{2}(%
\sqrt{n}-k-1)(n+3\sqrt{n}).
\end{align}

Now note that, due to symmetry, $v_0$ is the exact median of the vertex set~$%
V$ (with respect to the potential $\Phi$ of~\cite{EKS16}), that is, $%
\Phi_{V}(v_0) = min_{x\in V}\{\Phi_{V}(x)\}$. Furthermore note by \eqref{eq1}
and \eqref{eq2} that $\Phi_{V_{-k}}(v_{p,1}) \geq \Phi_{V_{-k}}(v_0)$ for
every $k<\sqrt{n}$. Moreover, due to symmetry this monotonicity of $%
\Phi_{V_{-k}}(\cdot)$ is extended to all vertices $v_{p,2}, v_{p,3}, \ldots,
v_{p,\sqrt{n}}$, that is, $\Phi_{V_{-k}}(v_{p,j}) \geq \Phi_{V_{-k}}(v_0)$
for every $1\leq j\leq \sqrt{n}$. Therefore $v_0$ remains the exact median
of each of the vertex sets $V_{-k}$, where $0\leq k< \sqrt{n}$.

Let $\varepsilon>0$. Then \eqref{eq1} and \eqref{eq2} imply that $%
\Phi_{V_{-k}}(v_{p,1}) \leq (1+\varepsilon)\Phi_{V_{-k}}(v_0)$ for every $k
< \sqrt{n}$ and for large enough $n$. Now assume that the algorithm of~\cite%
{EKS16} queries always an $(1+\varepsilon)$-median of the candidates' set $S$%
, where initially $S=V$. Then the algorithm may query always a different
neighbor of $v_0$. Due to symmetry, we may assume without loss of generality
that the algorithm queries the vertices $v_{1,1}, v_{2,1}, \ldots, v_{\sqrt{n%
},1}$ in this order. Note that these vertices are $(1+\varepsilon)$-medians
of the candidates' sets $V_{-0}, V_{-1}, \ldots, V_{-(\sqrt{n}-1)}$,
respectively. Therefore the algorithm makes at least $\sqrt{n}$ queries,
where the total number of vertices in the graph is $n-\sqrt{n}+1$.
\end{proof}


\subsection{Upper Bound for querying the Approximate Median\label%
{our-potential-subsec}}

In this section we introduce a new potential function $\Gamma
_{S}:V\rightarrow \mathbb{N}$ for every $S\subseteq V$, which overcomes the
problem occured in Section~\ref{apx-median-subsec}. This new potential
guarantees efficient detection of $t$ in at most $O(\log n)$ queries, even
when we always query an $(1+\varepsilon )$-median of the current candidates'
set $S$ (with respect to the new potential $\Gamma $), for any constant $%
0<\varepsilon <1$. Our algorithm is based on the approach of~\cite{EKS16},
however we now query an approximate median of the current set $S$ with
respect to $\Gamma $ (instead of an exact median with respect to $\Phi $ of~\cite{EKS16}).

\begin{definition}[\ Potential $\Gamma$\ ]
\label{gamma-potential-def} Let $S\subseteq V$ and $v\in V$. Then $\Gamma
_{S}(v)=\max \{|N(v,u)\cap S|:u\in N(v)\}$.
\end{definition}

\begin{theorem}
\label{thm:approx-gamma-potential}Let $0\leq\varepsilon <1$. There exists an
efficient adaptive algorithm which detects the target $t$ in at most $\frac{%
\log n}{1-\log (1+\varepsilon )}$ queries, by querying at each iteration an $%
(1+\varepsilon )$-median for the potential function $\Gamma $.
\end{theorem}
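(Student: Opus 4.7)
The plan is to mimic the halving strategy of~\cite{EKS16}, but with two modifications: we query an $(1+\varepsilon)$-approximate minimizer of $\Gamma_S$ instead of an exact minimizer of $\Phi_S$, and we track progress directly via $|S|$ rather than via potential values. The algorithm is the obvious one: start with $S = V$; at each iteration pick an $(1+\varepsilon)$-median $v$ of $\Gamma_S$, query it, and replace $S$ by $S \cap N(v,u)$, where $u$ is the neighbor returned by the query (stopping when $|S| = 1$ or a query reveals $t$).

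The first key step is to show that for every $S \subseteq V$ there exists a vertex $v^\star$ with $\Gamma_S(v^\star) \le |S|/2$. For this I would re-use the $\Phi$-median argument from~\cite{EKS16}: let $v^\star \in \arg\min_{v \in V} \Phi_S(v)$ and take any neighbor $u$ of $v^\star$. Splitting $S$ into $A = N(v^\star,u)\cap S$ and $B = S \setminus A$, any $x \in A$ satisfies $d(u,x) = d(v^\star,x) - w(v^\star u)$, while any $x \in B$ satisfies $d(u,x) \le d(v^\star,x) + w(v^\star u)$ by the triangle inequality. Summing gives
\begin{equation*}
0 \le \Phi_S(u) - \Phi_S(v^\star) \le w(v^\star u)\bigl(|S| - 2|A|\bigr),
\end{equation*}
so $|N(v^\star,u) \cap S| \le |S|/2$ for every neighbor $u$, and hence $\Gamma_S(v^\star) \le |S|/2$. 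Consequently $\min_{v \in V} \Gamma_S(v) \le |S|/2$.

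The second step is the progress bound. If $v$ is an $(1+\varepsilon)$-median of $\Gamma_S$ then $\Gamma_S(v) \le (1+\varepsilon)\min_{x} \Gamma_S(x) \le \frac{1+\varepsilon}{2}|S|$; and for the actual response $u$ we have, by the very definition of $\Gamma$, $|S \cap N(v,u)| \le \Gamma_S(v)$. Thus each iteration shrinks the candidates' set by a factor of at least $\frac{2}{1+\varepsilon}$. Since $t$ always lies in $S$ (because the oracle returns a neighbor on a shortest $v$-to-$t$ path, so $t \in N(v,u)$), after $k$ iterations $|S| \le \bigl(\tfrac{1+\varepsilon}{2}\bigr)^k n$; the algorithm terminates as soon as $|S| \le 1$, which happens once $k \ge \log n / (1 - \log(1+\varepsilon))$, matching the claimed bound (note $1 - \log(1+\varepsilon) > 0$ exactly because $\varepsilon < 1$).

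Finally, for efficiency, I would observe that $\Gamma_S(v)$ is computable in polynomial time from an all-pairs-shortest-paths table: the sets $N(v,u)$ depend only on distances, so for each $v$ we examine its (at most $\Delta$) neighbors and count how many $x \in S$ satisfy $d(v,x) = w(vu) + d(u,x)$; an $(1+\varepsilon)$-approximate minimizer can then be found by scanning all vertices. The main subtlety of the argument is really only the first step: adapting the $\Phi$-median proof of~\cite{EKS16} so that the bound transfers from $\Phi$ to $\Gamma$ and survives the $(1+\varepsilon)$ slack. Everything else is a clean geometric-sequence counting argument.
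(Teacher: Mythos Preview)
Your proposal is correct and follows essentially the same approach as the paper's own proof: both use the $\Phi$-minimizer $v^\star$ to establish $\Gamma_S(v^\star)\le |S|/2$ via the same split-and-triangle-inequality computation, then propagate this bound through the $(1+\varepsilon)$-approximate $\Gamma$-median to get the $\tfrac{1+\varepsilon}{2}$ shrinkage factor and the resulting $\log_{2/(1+\varepsilon)} n$ query bound. Your write-up is in fact slightly more explicit than the paper's in noting that $t\in N(v,u)$ keeps the invariant $t\in S$, and in spelling out the polynomial-time computability of $\Gamma_S$.
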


\begin{proof}
Our proof closely follows the proof of Theorem 3 of~\cite{EKS16}. Let $%
S\subseteq V$ be an arbitrary set of vertices of $G$ such that $t\in S$. We
will show that there exists a vertex $v\in V$ such that $\Gamma _{S}(v)\leq 
\frac{|S|}{2}$. First recall the potential $\Phi _{S}(v)=\sum_{x\in S}d(v,x)$%
. Let now $v_{0}\in V$ be a vertex such that $\Phi _{S}(v_{0})$ is
minimized, i.e.~$\Phi _{S}(v_{0})\leq \Phi _{S}(v)$ for every $v\in V$. Let $%
u\in N(v_{0})$ be an \emph{arbitrary} vertex adjacent to $v_{0}$. We will
prove that $|N(v_{0},u)\cap S|\leq \frac{|S|}{2}$. Denote $%
S^{+}=N(v_{0},u)\cap S$ and $S^{-}=S\setminus S^{+}$. By definition, for
every $x\in S^{+}$, the edge $v_{0}u$ lies on a shortest path from $v_{0} $
to $x$, and thus $d(u,x)=d(v_{0},x)-w(v_{0}u)$. On the other hand, trivially 
$d(u,x)\leq d(v_{0},x)+w(v_{0}u)$ for every $x\in S$, and thus in particular
for every $x\in S^{-}$. Therefore $\Phi _{S}(v_{0})\leq \Phi _{S}(u)\leq
\Phi _{S}(v_{0})+(|S^{-}|-|S^{+}|)\cdot w(v_{0}u)$, and thus $|S^{+}|\leq
|S^{-}|$. That is, $|N(v_{0},u)\cap S|=|S^{+}|\leq \frac{|S|}{2}$, since $%
S^{-}=S\setminus S^{+}$. Therefore which then implies that $\Gamma
_{S}(v_{0})\leq \frac{|S|}{2}$ as the choice of the vertex $u\in N(v_{0})$
is arbitrary.

Let $v_{m}\in V$ be an exact median of $S$ with respect to $\Gamma $. That
is, $\Gamma _{S}(v_{m})\leq \Gamma _{S}(v)$ for every $v\in V$. Note that $%
\Gamma _{S}(v_{m})\leq \Gamma _{S}(v_{0})\leq \frac{|S|}{2}$. Now let $%
0\leq\varepsilon <1$ and let $v_{a}\in V$ be an $(1+\varepsilon )$-median of 
$S$ with respect to $\Gamma $. Then $\Gamma _{S}(v_{a})\leq (1+\varepsilon
)\Gamma _{S}(v_{m})\leq \frac{1+\varepsilon }{2}|S|$. Our adaptive algorithm
proceeds as follows. Similarly to the algorithm of~\cite{EKS16} (see Theorem
3 of~\cite{EKS16}), our adaptive algorithm maintains a candidates' set $S$,
where initially $S=V$. At every iteration our algorithm queries an arbitrary 
$(1+\varepsilon )$-median $v_{m}\in V$ of the current set $S$ with respect
to the potential $\Gamma $. Let $u\in N(v_{m})$ be the vertex returned by
this query; the algorithm updates $S$ with the set $N(v,u)\cap S$. Since $%
\Gamma _{S}(v_{a})\leq \frac{1+\varepsilon }{2}|S|$ as we proved above, it
follows that the updated candidates' set has cardinality at most $\frac{%
1+\varepsilon }{2}|S|$. Thus, since initially $|S|=n$, our algorithm detects
the target $t$ after at most $\log _{\left( \frac{2}{1+\varepsilon }\right)
}n=\frac{\log n}{1-\log (1+\varepsilon )}$ queries.
\end{proof}

\medskip

Notice in the statement of Theorem~\ref{thm:approx-gamma-potential} that for 
$\varepsilon=0$ (i.e.~when we always query an exact median) we get an upper
bound of $\log n$ queries, as in this case the size of the candidates' set
decreases by a factor of at least $2$. Furthermore notice that the reason
that the algorithm of~\cite{EKS16} is not query-efficient when querying an $%
(1+\varepsilon )$-median is that the potential $\Phi _{S}(v)$ of~\cite{EKS16}
can become quadratic in $|S|$, while on the other hand the value of our
potential $\Gamma _{S}(v)$ can be at most $|S|$ by Definition~\ref%
{gamma-potential-def}, for every $S\subseteq V$ and every $v\in V$.
Furthermore notice that, knowing only the value $\Phi _{S}(v)$ for some
vertex $v\in V$ is not sufficient to provide a guarantee for the
proportional reduction of the set $S$ when querying $v$. In contrast, just
knowing the value $\Gamma _{S}(v)$ directly provides a guarantee that, if we
query vertex $v$ the set $S$ will be reduced by a proportion of $\frac{%
\Gamma _{S}(v)}{|S|}$, regardless of the response of the query. Therefore,
in practical applications, we may not need to necessarily compute an (exact
or approximate) median of $S$ to make significant progress.


\section{Detecting Two Targets\label{sec:two-targets}}

In this section we consider the case where there are two unknown targets $%
t_{1}$ and $t_{2}$, i.e.~$T=\{t_{1},t_{2}\}$. In this case the direction
query at vertex $v$ returns with probability $p_{1}$ (resp.~with probability 
$p_{2}=1-p_{1}$) a neighbor $u\in N(v)$ such that $t_{1}\in N(v,u)$ (resp.~$%
t_{2}\in N(v,u)$). Detecting more than one unknown targets has been raised
as an open question by Emamjomeh-Zadeh et al.~\cite{EKS16}, while to the
best of our knowledge no progress has been made so far in this direction.
Here we deal with both problems of detecting at least one of the targets and
detecting both targets. We study several different settings and derive both
positive and negative results for them. Each setting differs from the other
ones on the ``freedom'' the adversary has
on responding to queries, or on the power of the queries themselves. We will
say that the response to a query \emph{directs to} $t_{i}$, where $i\in
\{1,2\}$, if the vertex returned by the query lies on a shortest path
between the queried vertex and $t_{i}$.

It is worth mentioning here that, if an adversary would be free to
arbitrarily choose which $t_{i}$ each query directs to (i.e.~instead of
directing to $t_{i}$ with probability $p_{i}$), then any algorithm would
require at least $\lfloor \frac{n}{2}\rfloor $ (resp.~$n-2$) queries to
detect at least one of the targets (resp.~both targets), even when the graph
is a path. Indeed, consider a path $v_{1},\ldots ,v_{n}$ where $t_{1}\in
\{v_{1},\ldots ,v_{\lfloor \frac{n}{2}\rfloor }\}$ and $t_{2}\in
\{v_{\lfloor \frac{n}{2}\rfloor +1},\ldots ,v_{n}\}$. Then, for every $i\in
\{1,\ldots ,\lfloor \frac{n}{2}\rfloor \}$, the query at $v_{i}$ would
return $v_{i+1}$, i.e.~it would direct to $t_{2}$. Similarly, for every $%
i\in \{\lfloor \frac{n}{2}\rfloor +1,\ldots ,n\}$, the query at $v_{i}$
would return $v_{i-1}$, i.e.~it would direct to $t_{1}$. It is not hard to
verify that in this case the adversary could ``hide'' the target $t_{1}$ at any of the first $\lfloor 
\frac{n}{2}\rfloor $ vertices which is not queried by the algorithm and the
target $t_{2}$ on any of the last $n-\lfloor \frac{n}{2}\rfloor $ vertices
which is not queried. Hence, at least $\lfloor \frac{n}{2}\rfloor $ queries
(resp.~$n-2$ queries) would be required to detect one of the targets (resp.
both targets) in the worst case.

As a warm-up, we provide in the next theorem an efficient algorithm that
detects with high probability both targets in a tree using $O(\log ^{2}n)$
queries. 

\begin{theorem}
\label{thm:iuartrees}For any constant $0<p_{1}<1$, we can detect with
probability at least $\left( 1-\frac{\log n}{n}\right) ^{2}$ both targets in
a tree with $n$ vertices using $O(\log ^{2}n)$ queries.
\end{theorem}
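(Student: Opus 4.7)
The plan is to combine the centroid decomposition of the tree with a boosting-by-repetition step at each level of the recursion. At each stage I maintain a subtree $T'$ of $T$ for which I have already certified (with high probability) that both $t_1$ and $t_2$ lie inside $T'$; initially $T' = T$. I compute a centroid $v$ of $T'$ (whose removal leaves components of size at most $|T'|/2$ each) and issue $c\log n$ direction queries at $v$ for a sufficiently large constant $c = c(p_1,p_2)$. In a tree, the shortest path from $v$ to any other vertex passes through a uniquely determined neighbor of $v$, so each query at $v$ returns one of at most two distinguished neighbors $u_1, u_2$ (possibly coincident): $u_i$ is the neighbor on the unique path from $v$ to $t_i$, returned with probability $p_i$ (and deterministically if $u_1 = u_2$).

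If all $c\log n$ responses at $v$ coincide with a single neighbor $u$, I decide that both targets lie in the component of $T' \setminus \{v\}$ through $u$ and recurse there; the probability of missing the minority neighbor when $u_1 \neq u_2$ is at most $(1-\min\{p_1,p_2\})^{c\log n} = n^{-\Omega(1)}$, which I drive below $1/n^2$ by choosing $c$ large. Each such recursive call halves $|T'|$, so this branch has depth at most $\log n$ and consumes $O(\log^2 n)$ queries overall. If instead two distinct neighbors $u_1, u_2$ appear in the sample, I conclude that the targets lie in different subtrees and proceed in parallel to locate $t_1$ and $t_2$ inside their respective subtrees, each of size at most $n/2$.

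For the single-target search inside the subtree $S$ containing only, say, $t_1$, I exploit the tree structure: at any vertex $w \in S$ a response directing to $t_2$ must return the unique neighbor of $w$ on the path back toward $v$ (the root of $S$), whereas a response directing to $t_1$ returns a neighbor strictly inside $S$. I can therefore syntactically discard the out-of-subtree responses and keep only the informative ones, which arrive at rate $p_1$. By a Chernoff bound, $O(\log n)$ raw queries yield at least $\log n$ useful queries with probability $1 - n^{-\Omega(1)}$, at which point the single-target tree-search algorithm of~\cite{EKS16} pinpoints $t_1$ in $\log n$ useful queries.

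By choosing the constant $c$ so that every one of the $O(\log n)$ centroid phases and each of the two single-target searches fails with probability at most $\log n / n$, a union bound per target yields success probability at least $1 - \log n/n$ for locating each of $t_1$ and $t_2$ individually, and hence at least $\left(1 - \log n / n\right)^2$ for locating both. The total query count is $O(\log^2 n)$, namely $O(\log n)$ centroid phases of $O(\log n)$ queries each followed by two single-target searches of $O(\log n)$ queries. The main technical obstacle is calibrating $c$ so that the union bounds over the recursion depth and the Chernoff-style concentration in the single-target searches all fit within the stated probability budget, while checking that the syntactic discarding of out-of-subtree responses remains valid throughout — which it does, precisely because $G$ is a tree and every vertex has a unique back-neighbor toward the root of the current subtree.
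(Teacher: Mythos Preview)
Your overall strategy --- keep both targets inside a shrinking subtree until the responses at some centroid split, then run two single-target searches in the two branches --- is different from the paper's, which instead proceeds in two sequential phases: first locate \emph{some} target $t_0$ (by repeating $\Theta(\log n)$ queries at each median and going toward the very first median $v_1$ only if \emph{all} responses point that way), and then repeat the same procedure using the already-found $t_0$ as the landmark. Both ideas exploit uniqueness of shortest paths in trees and both are viable routes to $O(\log^2 n)$ queries.

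There is, however, a genuine gap in your single-target phase. You assert that at $w\in S$ a response directing to $t_2$ is ``out-of-subtree'' and hence syntactically discardable, while a response directing to $t_1$ lies ``strictly inside $S$''. But the neighbour of $w$ on the $w$--$v$ path is itself a vertex of $S$ whenever $w\neq u_1$; the two criteria are not disjoint. In particular, when $t_1$ happens to lie between $w$ and $v$ --- which is perfectly possible once $w$ is a centroid of the current candidate set --- the $t_1$-response and the $t_2$-response are the \emph{same} neighbour. Your filter then discards every response, the ``informative'' queries do not arrive at rate $p_1$ at that centroid, and the Chernoff argument for ``$O(\log n)$ raw queries yield $\log n$ useful queries'' collapses.

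The repair is precisely the rule the paper uses in its second phase: at each centroid make $\Theta(\log n)$ queries; if \emph{any} response is not the back-neighbour toward the landmark, recurse there (it certainly points to the in-subtree target); otherwise recurse toward the back-neighbour, which is wrong with probability at most $(\max(p_1,p_2))^{\Theta(\log n)}$. This costs $\Theta(\log^2 n)$ queries per single-target search rather than the $\Theta(\log n)$ you budgeted, but since your joint phase already spends $\Theta(\log^2 n)$ queries, the total remains $O(\log^2 n)$ and the stated probability bound goes through.
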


\begin{proof}
Let $G=(V,E)$ be a tree on $n$ vertices and let $T=\{t_{1},t_{2}\}$ be the
two targets. The algorithm runs in two phases. In each phase it maintains a
candidates' set $S\subseteq V$ such that, with high probability, $S$
contains at least one of the yet undiscovered targets. At the beginning of
each phase $S=V$. Let without loss of generality $p_{1}\geq p_{2}$.
Furthermore let $\alpha =-\frac{1}{\log p_{1}}$; note that $\alpha \geq 1$.

The first phase of the algorithm proceeds in $\log n$ iterations, as
follows. At the beginning of the $i$th iteration, where $1\leq i\leq \log n$%
, the candidates' set is $S_{i}$; note that $S_{1}=V$ at the beginning of
the first iteration. Let $v_{i}$ be a median of $S_{i}$ (with respect to the
potential $\Gamma $ of Section~\ref{our-potential-subsec}). In the first
iteration we query the median $v_{1}$ of $V$ once; let $u_{1}$ be the
response of this query. Then we know that one of the two targets belongs to
the set $N(v_{1},u_{1})$, thus we compute the updated candidates' set $%
S_{2}=N(v_{1},u_{1})$. Furthermore, since $v_{1}$ was chosen to be a median
of $S_{1}$, it follows that $|S_{2}|\leq \frac{|S_{1}|}{2}=\frac{n}{2}$.

For each $i\geq 2$, the $i$th iteration proceeds as follows. We query the
median $v_{i}$ of the set $S_{i}$ for $\alpha \log n$ times. First assume
that at least one of these $\alpha \log n$ queries at $v_{i}$ directs to a
subtree of $v_{i}$ (within $S_{i}$) that does not contain the first median $%
v_{1}$ of $S_{1}=V$, and let $u_{i}^{\prime }$ be the response of that
query. Then we know that the subtree of $v_{i}$ (within $S_{i}$) which is
rooted at $u_{i}^{\prime }$ contains at least one of the targets that belong
to $S_{i}$. Thus we compute the updated candidates' set $S_{i+1}=S_{i}\cap
N(v_{i},u_{i}^{\prime })$, where again $|S_{i+1}|\leq \frac{|S_{i}|}{2}$.

Now assume that all of the $\alpha \log n$ queries at $v_{i}$ direct to the
subtree of $v_{i}$ that contains the median $v_{1}$ of the initial
candidates' set $S_{1}=V$. Let $u_{i}^{\prime \prime }$ be the (unique)
neighbor of $v_{i}$ in that subtree, that is, all $\alpha \log n$ queries at 
$v_{i}$ return the vertex $u_{i}^{\prime \prime }$. Then we compute the
updated candidates' set $S_{i+1}=S_{i}\cap N(v_{i},u_{i}^{\prime \prime })$,
where again $|S_{i+1}|\leq \frac{|S_{i}|}{2}$. In this case, the probability
that at least one of the targets of $S_{i}$ does not belong to the subtree
of $v_{i}$ (within $S_{i}$) which is rooted at $u_{i}^{\prime \prime }$ is
upper bounded by the probability $p_{1}^{\alpha \log n}$ that each of the $%
\alpha \log n$ queries at $v_{i}$ directs to a target that does not belong
to $S_{i}$. That is, with probability at least $1-p_{1}^{\alpha \log n}$, at
least one of the targets of $S_{i}$ (which we are looking for) belongs to
the subtree of $v_{i}$ (within $S_{i}$) rooted at $u_{i}^{\prime \prime }$.
Since at each iteration the size of the candidates' set decreases by a
factor of $2$, it follows that $|S_{\log n}|=1$. The probability that at
each of the $\log n$ iterations we maintained a target from the previous
candidates' set to the next one is at least $\left( 1-p_{1}^{\alpha \log
n}\right) ^{\log n}=\left( 1-\frac{1}{n}\right) ^{\log n}\geq 1-\frac{\log n%
}{n}$ by Bernoulli's inequality. That is, with probability at least $1-\frac{%
\log n}{n}$ we detect during the first phase one of the two targets in $\log
n$ iterations, i.e.~in $\alpha \log ^{2}n$ queries in total.

Let $t_{0}$ be the target that we detected during the first phase. In the
second phase we are searching for the other target $t_{0}^{\prime }\in
T\setminus \{t_{0}\}$. The second phase of the algorithm proceeds again in $%
\log n$ iterations, as follows. Similarly to the first phase, we maintain at
the beginning of the $i$th iteration, where $1\leq i\leq \log n$, a
candidates' set $S_{i}$ with median $v_{i}$, where $S_{1}=V$ at the
beginning of the first iteration.

For each $i\geq 1$, in the $i$th iteration of the second phase we query $%
\alpha \log n$ times the median $v_{i}$ of the set $S_{i}$. First assume
that at least one of these $\alpha \log n$ queries at $v_{i}$ directs to a
subtree of $v_{i}$ (within $S_{i}$) that does not contain the target $t_{0}$
that we detected in the first phase, and let $u_{i}^{\prime }$ be the
response of that query. Then we can conclude that the other target $%
t_{0}^{\prime }$ belongs to the set $N(v_{i},u_{i}^{\prime })$, thus we
compute the updated candidates' set $S_{i+1}=S_{i}\cap N(v_{i},u_{i}^{\prime
})$, where $|S_{i+1}|\leq \frac{|S_{i}|}{2}$.

Now assume that all of the $\alpha \log n$ queries at $v_{i}$ direct to the
subtree of $v_{i}$ that contains the target $t_{0}$. Let $u_{i}^{\prime
\prime }$ be the (unique) neighbor of $v_{i}$ in that subtree, that is, all $%
\alpha \log n$ queries at $v_{i}$ return the vertex $u_{i}^{\prime \prime }$%
. Then we compute the updated candidates' set $S_{i+1}=S_{i}\cap
N(v_{i},u_{i}^{\prime \prime })$, where again $|S_{i+1}|\leq \frac{|S_{i}|}{2%
}$. In this case, the probability that the undiscovered target $%
t_{0}^{\prime }$ does not belong to the subtree of $v_{i}$ (within $S_{i}$)
which is rooted at $u_{i}^{\prime \prime }$ is upper bounded by the
probability $p_{1}^{\alpha \log n}$ that each of the $\alpha \log n$ queries
at $v_{i}$ directs to $t_{0}$. That is, with probability at least $%
1-p_{1}^{\alpha \log n}$, the target $t_{0}^{\prime }$ belongs to the
subtree of $v_{i}$ (within $S_{i}$) rooted at $u_{i}^{\prime \prime }$.
Since at each iteration the size of the candidates' set decreases by a
factor of at least $2$, it follows that $|S_{\log n}|=1$. The probability
that at each of the $\log n$ iterations we maintained the target $%
t_{0}^{\prime }$ in the candidates' set is at least $\left( 1-p_{1}^{\alpha
\log n}\right) ^{\log n}\geq 1-\frac{\log n}{n}$. That is, with probability
at least $1-\frac{\log n}{n}$ we detect in $\alpha \log ^{2}n$ queries
during the second phase the second target $t_{0}^{\prime }$, given that we
detected the other target $t_{0}$ in the first phase.

Summarizing, with probability at least $\left( 1-\frac{\log n}{n}\right)
^{2} $ we detect both targets in $2\alpha \log ^{2}n$ queries.
\end{proof}

\medskip

Since in a tree both targets $t_{1},t_{2}$ can be detected with high
probability in $O(\log ^{2}n)$ queries by Theorem~\ref{thm:iuartrees}, we
consider in the remainder of the section arbitrary graphs instead of trees.
First we consider in Section~\ref{biased-subsec} \emph{biased} queries,
i.e.~queries with $p_{1}>\frac{1}{2}$.  
Second we consider in Section~\ref{unbiased-subsec} \emph{unbiased} queries,
i.e.~queries with $p_{1}=p_{2}=\frac{1}{2}$.

\subsection{Upper Bounds for Biased Queries\label{biased-subsec}}

In this section we consider biased queries which direct to $t_{1}$ with
probability $p_{1}>\frac{1}{2}$ and to~$t_{2}$ with probability $%
p_{2}=1-p_{1}<\frac{1}{2}$. As we can detect in this case the first target $%
t_{1}$ with high probability in $O(\log n)$ queries by using the
``noisy'' framework of~\cite{EKS16}, our
aim becomes to detect the second target $t_{2}$ with the fewest possible
queries, once we have already detected~$t_{1}$.

For every vertex $v$ and every $i\in \{1,2\}$, denote by $%
E_{t_{i}}(v)=\{u\in N(v):t_{i}\in N(v,u)\}$ the set of neighbors of $v$ such
that the edge $uv$ lies on a shortest path from $v$ to $t_{i}$. Note that
the sets $E_{t_{1}}(v)$ and $E_{t_{2}}(v)$ can be computed in polynomial
time, e.g.~using Dijkstra's algorithm. We assume that, once a query at
vertex $v$ has chosen which target $t_{i}$ it directs to, it returns each
vertex of $E_{t_{i}}(v)$ equiprobably and independently from all other
queries. Therefore, each of the vertices of $E_{t_{1}}(v)\setminus
E_{t_{2}}(v)$ is returned by the query at $v$ with probability $\frac{p_{1}}{%
|E_{t_{1}}(v)|}$, each vertex of $E_{t_{2}}(v)\setminus E_{t_{1}}(v)$ is
returned with probability $\frac{1-p_{1}}{|E_{t_{2}}(v)|}$, and each vertex
of $E_{t_{1}}(v)\cap E_{t_{2}}(v)$ is returned with probability $\frac{p_{1}%
}{|E_{t_{1}}(v)|}+\frac{1-p_{1}}{|E_{t_{2}}(v)|}$. We will show in Theorem %
\ref{thm:two-targets-bounded} that, under these assumptions, we detect the
second target $t_{2}$ with high probability in $O(\Delta \log ^{2}n)$
queries where $\Delta $ is the maximum degree of the graph.

The high level description of our algorithm (Algorithm~\ref{alg:two-targets}) 
is as follows. Throughout the algorithm we maintain a candidates' set $S$
of vertices in which $t_{2}$ belongs with high probability. Initially $S=V$.
In each iteration we first compute an (exact or approximate) median $v$ of $%
S $ with respect to the potential $\Gamma $ (see Section~\ref%
{our-potential-subsec}). Then we compute the set $E_{t_{1}}(v)$ (this can be
done as $t_{1}$ has already been detected) and we query $c\Delta \log n$
times vertex $v$, where $c=\frac{7(1+p_{1})^{2}}{p_{1}(1-p_{1})^{2}}$ is a
constant. Denote by $Q(v)$ the multiset of size $c\Delta \log n$ that
contains the vertices returned by these queries at $v$. If at least one of
these $O(\Delta \log n)$ queries at $v$ returns a vertex $u\notin
E_{t_{1}}(v)$, then we can conclude that $u\in E_{t_{2}}(v)$, and thus we
update the set $S$ by $S\cap N(v,u)$. Assume otherwise that all $O(\Delta
\log n)$ queries at $v$ return vertices of $E_{t_{1}}(v)$. Then we pick a
vertex $u_{0}\in N(v)$ that has been returned most frequently among the $%
O(\Delta \log n)$ queries at $v$, and we update the set $S$ by $S\cap
N(v,u_{0})$. As it turns out, $u_{0}\in E_{t_{2}}(v)$ with high probability.
Since we always query an (exact or approximate) median $v$ of the current
candidates' set $S$ with respect to the potential $\Gamma $, the size of $S$
decreases by a constant factor each time. Therefore, after $O(\log n)$
updates we obtain $|S|=1$. It turns out that, with high probability, each
update of the candidates' set was correct, i.e.~$S=\{t_{2}\}$. Since for
each update of $S$ we perform $O(\Delta \log n)$ queries, we detect $t_{2}$
with high probability in $O(\Delta \log ^{2}n)$ queries in total.

\begin{algorithm}[htb]
\caption{Given $t_{1}$, detect $t_{2}$ with high probability with $O(\Delta \log^2 n)$ queries} \label{alg:two-targets}
\begin{algorithmic}[1]

\STATE{$S \leftarrow V$; \ $c \leftarrow \frac{7(1+p_{1})^2}{p_{1}(1-p_{1})^2}$}

\WHILE{$|S|>1$}
     \STATE{Compute an (approximate) median $v$ of $S$ with respect to potential $\Gamma$; \ Compute $E_{t_{1}}(v)$}
     \STATE{Query $c \Delta \log n$ times vertex $v$; \ Compute the multiset $Q(v)$ of these query responses} \label{step:direction}
     \IF{$Q(v) \setminus E_{t_{1}}(v) \neq \emptyset$}
          \STATE{Pick a vertex $u \in Q(v) \setminus E_{t_{1}}(v)$ and set $S \leftarrow S \cap N(v,u)$} \label{step:det-update}
     \ELSE
          \STATE{Pick a most frequent vertex $u \in Q(v)$ and set $S \leftarrow S \cap N(v,u)$} \label{step:prob-update}
       \ENDIF
\ENDWHILE

\medskip

\RETURN{the unique vertex in $S$}
\end{algorithmic}
\end{algorithm}

Recall that every query at $v$ returns a vertex $u\in E_{t_{1}}(v)$ with
probability $p_{1}$ and a vertex $u\in E_{t_{2}}(v)$ with probability $%
1-p_{1}$. Therefore, for every $v\in V$ the multiset $Q(v)$ contains at
least one vertex $u\in E_{t_{2}}(v)$ with probability at least $%
1-p_{1}^{|Q(v)|}=1-p_{1}^{|c\Delta \log n|}$. In the next lemma we prove
that, every time we update $S$ using Step~\ref{step:prob-update}, the
updated set contains $t_{2}$ with high probability.

\begin{lemma}
\label{lem:prob-update} Let $S\subseteq V$ such that $t_{2}\in S$ and let $%
S^{\prime }=S\cap N(v,u)$ be the updated set at Step~\ref{step:prob-update}
of Algorithm~\ref{alg:two-targets}. Then $t_{2}\in S^{\prime }$ with
probability at least $1-\frac{2}{n}$.
\end{lemma}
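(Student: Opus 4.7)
The plan is to partition $N(v)$ into the three pairwise disjoint sets
$A := E_{t_1}(v)\setminus E_{t_2}(v)$,
$B := E_{t_1}(v)\cap E_{t_2}(v)$, and
$C := E_{t_2}(v)\setminus E_{t_1}(v)$.
Step~\ref{step:prob-update} is executed exactly when $Q(v)\subseteq A\cup B$, and since $t_2\in S'$ is equivalent to the selected vertex $u$ lying in $E_{t_2}(v)$, the failure event becomes ``Step~\ref{step:prob-update} is reached and the most frequent vertex of $Q(v)$ lies in $A$''. I would bound this failure by $2/n$ via a case split on whether $C$ is empty or not.

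If $C\neq\emptyset$, each of the $N:=c\Delta\log n$ queries independently returns a vertex of $C$ with probability $(1-p_1)|C|/|E_{t_2}(v)|\geq (1-p_1)/\Delta$, using $|C|\geq 1$ and $|E_{t_2}(v)|\leq\Delta$. Therefore
\[
\Pr[\text{Step~\ref{step:prob-update} is reached}]
=\Pr\bigl[Q(v)\cap C=\emptyset\bigr]
\leq \Bigl(1-\tfrac{1-p_1}{\Delta}\Bigr)^{\!c\Delta\log n}
\leq e^{-c(1-p_1)\log n}\leq \tfrac{1}{n},
\]
where the last step uses the easy lower bound $c(1-p_1)=7(1+p_1)^2/[p_1(1-p_1)]\geq 63$ valid for every $p_1\geq 1/2$, which comfortably exceeds $1/\ln 2$.

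If $C=\emptyset$, then $B=E_{t_2}(v)\subseteq E_{t_1}(v)$ and $d_2:=|E_{t_2}(v)|\leq d_1:=|E_{t_1}(v)|$; Step~\ref{step:prob-update} is always reached. For every $u\in A$, $X_u\sim\mathrm{Bin}(N,p_1/d_1)$ with mean $\mu_A=Np_1/d_1$; for any fixed $u^*\in B$, $X_{u^*}\sim\mathrm{Bin}(N,p_1/d_1+(1-p_1)/d_2)$ with mean $\mu_B=\mu_A+N(1-p_1)/d_2\geq\mu_A/p_1$. Setting $\delta:=(1-p_1)/(1+p_1)$, a short algebraic check yields $(1+\delta)\mu_A\leq(1-\delta)\mu_B$, with equality precisely when $d_1=d_2$. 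Multiplicative Chernoff then gives
\[
\Pr[X_u\geq(1+\delta)\mu_A]\leq e^{-\delta^2\mu_A/3},
\qquad
\Pr[X_{u^*}\leq(1-\delta)\mu_B]\leq e^{-\delta^2\mu_B/2},
\]
and substituting the chosen $c$, together with $\mu_A\geq cp_1\log n$ and $\mu_B\geq c\log n$ in the worst case $d_1=d_2=\Delta$, makes each exponent at least $\tfrac{7}{3}\log n$. On the intersection of the two Chernoff events, $X_u<(1+\delta)\mu_A\leq(1-\delta)\mu_B<X_{u^*}$, so $u^*$ strictly beats $u$. A union bound over the $|A|\leq\Delta\leq n$ vertices of $A$ (plus one tail bound for the fixed $u^*$) yields total failure at most $1/n$ in this case.

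Summing the two disjoint contributions yields the claimed bound of $2/n$. The main technical obstacle is the second case: because the multiplicative gap $\mu_B/\mu_A$ is as small as $1/p_1$ when $d_1=d_2$, the value $\delta=(1-p_1)/(1+p_1)$ is essentially forced by the requirement $(1+\delta)\mu_A\leq(1-\delta)\mu_B$, and the specific constant $7$ in $c$ is calibrated precisely so that the Chernoff exponent $\tfrac{7}{3}\log n$ is large enough to survive the union bound over up to $n$ vertices in $A$ while keeping the overall failure within $1/n$.
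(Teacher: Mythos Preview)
Your proof is correct and follows essentially the same Chernoff-bound argument as the paper, with the identical choice $\delta=(1-p_1)/(1+p_1)$ and the same constant $c$. Your explicit case split on whether $C=E_{t_2}(v)\setminus E_{t_1}(v)$ is empty is in fact slightly more careful than the paper's version, which writes the per-query return probability for $u\in E_{t_1}(v)\cap E_{t_2}(v)$ with $|\widehat{E}_{t_2}(v)|$ in the denominator and hence is tacitly arguing only in the case $C=\emptyset$; your separate treatment of $C\neq\emptyset$ (bounding directly the chance of ever reaching Step~\ref{step:prob-update}) cleanly fills that gap.
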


\begin{proof}
First we define the vertex subset $\widehat{E}_{t_{2}}(v) = E_{t_{2}}(v) \cap E_{t_{1}}(v)$. 
Assume that Step~\ref{step:prob-update} of Algorithm~\ref{alg:two-targets} is executed; 
then note that $Q(v)\subseteq E_{t_{1}}(v)$, 
i.e.~the query always returns either a vertex of $E_{t_{1}}(v)\setminus E_{t_{2}}(v)$ 
or a vertex of $\widehat{E}_{t_{2}}(v)$. 
Given the fact that Step~\ref{step:prob-update} of Algorithm~\ref{alg:two-targets} is executed, 
note that each of the vertices of $E_{t_{1}}(v)\setminus E_{t_{2}}(v)$ is returned 
by a query with probability $\frac{p_{1}}{|E_{t_{1}}(v)|}$ 
and each of the vertices of $\widehat{E}_{t_{2}}(v)$ is returned with probability 
$\frac{p_{1}}{|E_{t_{1}}(v)|}+\frac{1-p_{1}}{|\widehat{E}_{t_{2}}(v)|}$. 
Observe that these probabilities are the expected frequencies for these vertices in $Q(v)$, 
given the fact that $Q(v)\subseteq E_{t_{1}}(v)$. 
To prove the lemma it suffices to show that, whenever $Q(v)\subseteq E_{t_{1}}(v)$, 
the most frequent element of $Q(v)$ belongs to $E_{t_{1}}(v)\cap E_{t_{2}}(v)$ with high probability.
To this end, let $\delta =\frac{1-p_{1}}{1+p_{1}}$ and $c=\frac{7(1+p_{1})^{2}}{p_{1}(1-p_{1})^{2}}$ be two constants. 
Note that, for the chosen value of $\delta $, the inequality $|\widehat{E}_{t_{2}}(v)|\leq |E_{t_{}}(v)|$ 
is equivalent to
\begin{equation}
(1+\delta )\frac{p_{1}}{|E_{t_{1}}(v)|}\leq(1-\delta )\left( \frac{p_{1}}{%
|E_{t_{1}}(v)|}+\frac{1-p_{1}}{|\widehat{E}_{t_{2}}(v)|}\right)  \label{delta-bound-eq}
\end{equation}

Let $u\in E_{t_{1}}(v)\setminus E_{t_{2}}(v)$, i.e.~the query at $v$ directs
to $t_{1}$ but not to $t_{2}$. We define the random variable $Z_{i}(u)$,
such that $Z_{i}(u)=1$ if $u$ is returned by the $i$-th query at $v$ and $%
Z_{i}(u)=0$ otherwise. Furthermore define $Z(u)=\sum_{i=1}^{c\Delta \log
n}Z_{i}(u)$. Since $\Pr (Z_{i}(u)=1)=\frac{p_{1}}{|E_{t_{1}}(v)|}$, it
follows that $E(Z(u))=c\Delta \log n\frac{p_{1}}{|E_{t_{1}}(v)|}$ 
by the linearity of expectation. 
Then, using Chernoff's bounds it follows that%
\begin{align}
\Pr (Z(u)\geq (1+\delta )E(Z(u)))& \leq \exp \left( -\frac{\delta ^{2}}{3}%
\frac{p_{1}}{|E_{t_{1}}(v)|}c\Delta \log n\right)  \notag \\
& \leq\exp \left( -2\delta ^{2}\frac{(1+p_{1})^{2}}{(1-p_{1})^{2}}\log n\right)
\notag \\
& =\exp \left( -2\log n\right)  =\frac{1}{n^{2}}.  \label{chernoff-eq-1}
\end{align}%

Thus (\ref{chernoff-eq-1}) implies that the probability that there exists at
least one $u\in E_{t_{1}}(v)\setminus E_{t_{2}}(v)$ such that $Z(u)\geq
(1+\delta )E(Z(u))$ is%
\begin{equation}
\Pr \left( \exists u\in E_{t_{1}}(v)\setminus E_{t_{2}}(v):Z(u)\geq
(1+\delta )\frac{p_{1}}{|E_{t_{1}}(v)|}\right) \leq(\Delta -1)\frac{1}{n^{2}}\leq \frac{1}{n}.  \label{chernoff-eq-1-b}
\end{equation}

Now let $u^{\prime }\in \widehat{E}_{t_{2}}(v)$. Similarly to the
above we define the random variable $Z_{i}^{\prime }(u^{\prime })$, such
that $Z_{i}^{\prime }(u^{\prime })=1$ if $u^{\prime }$ is returned by the $i$%
-th query at $v$ and $Z_{i}^{\prime }(u^{\prime })=0$ otherwise. Furthermore
define $Z^{\prime }(u^{\prime })=\sum_{i=1}^{c\Delta \log n}Z_{i}^{\prime
}(u^{\prime })$. Since $\Pr (Z_{i}^{\prime }(u^{\prime })=1)=\frac{p_{1}}{%
|E_{t_{1}}(v)|}+\frac{1-p_{1}}{|\widehat{E}_{t_{2}}(v)|}$, by the linearity of expectation 
it follows that $%
E(Z(u))=c\Delta \log n\left( \frac{p_{1}}{|E_{t_{1}}(v)|}+\frac{1-p_{1}}{%
|\widehat{E}_{t_{2}}(v)|}\right) $. 
Then we obtain similarly to (\ref{chernoff-eq-1}) that 

\begin{align}
\Pr (Z^{\prime }(u^{\prime })\leq (1-\delta )E(Z^{\prime }(u^{\prime })))&
\leq \exp \left( -\frac{\delta ^{2}}{2}\left( \frac{p_{1}}{|E_{t_{1}}(v)|}+%
\frac{1-p_{1}}{|\widehat{E}_{t_{2}}(v)|}\right) c\Delta \log n\right)  \notag \\
& \leq\exp \left( -3\delta ^{2}\frac{(1+p_{1})^{2}}{p_{1}(1-p_{1})^{2}}\log
n\right)  \notag \\
& \leq\exp \left( -3\log n\right)  \leq \frac{1}{n^{2}}.  \label{chernoff-eq-2}
\end{align}

Thus, it follows by the union bound and by (\ref{delta-bound-eq}), (\ref{chernoff-eq-1-b}), 
and (\ref{chernoff-eq-2}) that%
\begin{equation}
\Pr (\exists u\in E_{t_{1}}(v)\setminus E_{t_{2}}(v):Z(u)\geq Z^{\prime
}(u^{\prime }))\leq \frac{2}{n}.  \label{chernoff-eq-3}
\end{equation}

That is, the most frequent element of $Q(v)$ belongs to $\widehat{E}_{t_{2}}(v)$ with probability at least $1-\frac{2}{n}$. This completes the
proof of the lemma.
\end{proof}

\medskip

With Lemma~\ref{lem:prob-update} in hand we can now prove the main theorem
of the section.

\begin{theorem}
\label{thm:two-targets-bounded}
Assume that every query at a vertex $v$ directs to $t_1$ and to $t_2$ 
with probability $p_1>\frac{1}{2}$ and $p_2=1-p_1$, respectively. 
Furthermore, once a query at a vertex $v$ has chosen 
which target it directs to, it returns any of the possible correct
answers equiprobably and independently from all other queries. 
Then, given $t_{1}$, Algorithm~\ref{alg:two-targets} detects $%
t_{2} $ in $O(\Delta \log ^{2}n)$ queries with probability at least $(1-%
\frac{2}{n})^{O(\log n)}$.
\end{theorem}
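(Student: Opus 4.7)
The plan is to combine three facts: a size-reduction per iteration on the candidates' set $S$, a per-iteration correctness guarantee, and a union bound over all iterations. First, at the beginning of every iteration the vertex $v$ queried by Algorithm~\ref{alg:two-targets} is an (approximate) median of the current set $S$ with respect to $\Gamma$; by (the proof of) Theorem~\ref{thm:approx-gamma-potential} we then have $\Gamma_S(v) \leq \frac{|S|}{2}$. Since both branches (Steps~\ref{step:det-update} and~\ref{step:prob-update}) set $S \leftarrow S \cap N(v,u)$ for some $u \in N(v)$, and by definition of $\Gamma$ we have $|S \cap N(v,u)| \leq \Gamma_S(v)$, the cardinality of $S$ shrinks by a factor of at least $2$ per iteration. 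Hence the while loop terminates after at most $\lceil \log n \rceil$ iterations; since Step~\ref{step:direction} issues exactly $c\Delta \log n$ queries per iteration (with $c$ a constant depending only on $p_1$), the total query count is $O(\Delta \log^2 n)$, as claimed.

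Next, I would argue that, conditional on $t_2 \in S$ at the start of an iteration, we retain $t_2$ in the updated set $S'$ with probability at least $1 - \frac{2}{n}$. Under the modeling assumptions of the theorem, every query response at $v$ lies in $E_{t_1}(v) \cup E_{t_2}(v)$. If Step~\ref{step:det-update} is executed, the chosen vertex $u \in Q(v) \setminus E_{t_1}(v)$ must have arisen from a query that directed to $t_2$, so $u \in E_{t_2}(v)$, giving $t_2 \in N(v,u)$ and hence $t_2 \in S \cap N(v,u) = S'$ deterministically. If instead Step~\ref{step:prob-update} is executed, then $Q(v) \subseteq E_{t_1}(v)$ and Lemma~\ref{lem:prob-update} applies directly, yielding that the most frequent vertex of $Q(v)$ lies in $\widehat{E}_{t_2}(v) \subseteq E_{t_2}(v)$ with probability at least $1 - \frac{2}{n}$; in that event $t_2 \in N(v,u) \cap S = S'$.

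Finally, I would take a union bound over the at most $\lceil \log n \rceil$ iterations. Since each iteration's batch of $c \Delta \log n$ queries is performed independently of all previous queries, the per-iteration success events are independent, and the probability that $t_2$ is retained throughout the execution is at least $\left(1 - \frac{2}{n}\right)^{O(\log n)}$. Conditional on this event, the loop terminates with $|S| = 1$ and the unique remaining vertex is $t_2$, so Algorithm~\ref{alg:two-targets} returns $t_2$. The main obstacle of the whole argument was already discharged in Lemma~\ref{lem:prob-update} (the Chernoff-based comparison of the counts of vertices in $E_{t_1}(v) \setminus E_{t_2}(v)$ against those in $\widehat{E}_{t_2}(v)$, made possible by the specific choice of the constant $c$); once that lemma is in hand, the theorem is essentially a bookkeeping combination of Theorem~\ref{thm:approx-gamma-potential}'s size-reduction guarantee with the per-iteration correctness provided by Lemma~\ref{lem:prob-update}.
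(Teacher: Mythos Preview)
Your proposal is correct and follows essentially the same approach as the paper's own proof: invoke Theorem~\ref{thm:approx-gamma-potential} to bound the number of iterations by $O(\log n)$ (hence $O(\Delta\log^2 n)$ queries total), observe that Step~\ref{step:det-update} preserves $t_2$ deterministically, apply Lemma~\ref{lem:prob-update} for Step~\ref{step:prob-update}, and multiply the per-iteration success probabilities. One minor slip: for an \emph{approximate} median you only get $\Gamma_S(v)\leq \tfrac{1+\varepsilon}{2}|S|$ rather than $\tfrac{|S|}{2}$, but this still yields $O(\log n)$ iterations and does not affect the argument.
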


\begin{proof}
Since we query at each iteration an $(1+\varepsilon )$-median for the
potential function $\Gamma $, recall by Theorem~\ref%
{thm:approx-gamma-potential} that after at most $\frac{\log n}{1-\log
(1+\varepsilon )}=O(\log n)$ iterations we will obtain $|S|=1$. Furthermore,
in every iteration the algorithm queries $c\Delta \log n$ times the $%
(1+\varepsilon )$-median of the current set, and thus the algorithm makes $%
O(\Delta \log ^{2}n)$ queries in total. Whenever the algorithm updates $S$
in Step~\ref{step:det-update} the target $t_{2}$ belongs to the updated set
with probability 1. Moreover, whenever the algorithm updates $S$ in Step~\ref%
{step:prob-update}, Lemma ~\ref{lem:prob-update} implies that the target $%
t_{2}$ belongs to the updated set with probability at least $(1-\frac{2}{n})$%
. Thus, the probability all the $O(\log n)$ updates of $S$ were correct,
i.e.~$t_{2}$ belongs to $S$ after each of the $O(\log n)$ updates, is at
least $(1-\frac{2}{n})^{O(\log n)}$.
\end{proof}

\medskip

Note by Theorem~\ref{thm:two-targets-bounded} that, whenever $\Delta
=O(poly\log n)$ we can detect both targets $t_{1}$ and $t_{2}$ in $%
O(poly\log n)$ queries. However, for graphs with larger maximum degree $%
\Delta $, the value of the maximum degree dominates any polylogarithmic
factor in the number of queries. The intuitive reason behind this is that,
for an (exact or approximate) median $v$ of the current set $S$, whenever $%
\deg (v)$ and $E_{t_{1}}(v)$ are large and $E_{t_{2}}(v)\subseteq
E_{t_{1}}(v)$, we can not discriminate with a polylogarithmic number of
queries between the vertices of $E_{t_{2}}(v)$ and the vertices of $%
E_{t_{1}}(v)\setminus E_{t_{2}}(v)$ with large enough probability. Although
this argument does not give any lower bound for the number of queries in the
general case (i.e.~when $\Delta $ is unbounded), it seems that more
informative queries are needed to detect both targets with polylogarithmic
queries in general graphs. We explore such more informative queries in
Section~\ref{sec:other-queries}.

\subsection{Lower Bounds for Unbiased Queries\label{unbiased-subsec}}

In this section we consider unbiased queries, i.e.~queries which direct to
each of the targets $t_{1},t_{2}$ with equal probability $p_{1}=p_{2}=\frac{1}{2}$. 
In this setting every query is indifferent between the two targets,
and thus the ``noisy'' framework of~\cite{EKS16} cannot be applied for detecting any of the two targets. 
In particular, in this section we generalize our study to the case of $2c\geq 2$ different 
targets $T=\{t_{1},t_{2},\ldots ,t_{2c}\}$, where the query to any vertex $v\notin T$ is
unbiased. That is, $p_{i}=\frac{1}{2c}$ for every $i\in \{1,2,\ldots ,2c\}$.
In the next theorem we prove that any deterministic (possibly adaptive) algorithm 
needs at least $\frac{n}{2}-c$ queries to detect one of the $2c$ targets.

\begin{theorem}
\label{thm:lb-c-targets}Suppose that there are $2c$ targets in the graph and
let $p_{i}=\frac{1}{2c}$ for every $i\in \{1,2,\ldots ,2c\}$. Then, any
deterministic (possibly adaptive) algorithm requires at least $\frac{n}{2}-c$
queries to locate at least one target, even in an unweighted cycle.
\end{theorem}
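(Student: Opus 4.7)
My plan is to give an adaptive adversary argument based on the fact that in an unweighted cycle the two endpoints of an antipodal pair are always seen from any external vertex as lying in opposite directions. This lets the adversary keep many ``antipodal pair'' configurations of $2c$ targets simultaneously consistent with every response it commits to, without ever having to commit to a specific target placement.

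Assume for concreteness that $n$ is even; label the cycle $v_0,v_1,\ldots ,v_{n-1}$ and set $P_j=\{v_j,v_{j+n/2}\}$ for $j=0,1,\ldots ,n/2-1$, obtaining $n/2$ pairwise-disjoint \emph{antipodal pairs}. I would call a set $T\subseteq V$ a \emph{pair-configuration} if it is a disjoint union of $c$ such pairs, and view every pair-configuration as a legal target placement of size $2c$. The key geometric fact, which I would verify first, is that for every $v_i\notin P_j$ exactly one endpoint of $P_j$ lies in the clockwise arc $N(v_i,v_{i+1})$ and the other in the counterclockwise arc $N(v_i,v_{i-1})$. Therefore, if $T$ is any pair-configuration made from pairs avoiding $v_i$, then $T$ has exactly $c$ targets clockwise and $c$ counterclockwise of $v_i$, and under unbiased queries each of the two possible responses $v_{i+1}$ and $v_{i-1}$ occurs with positive probability (in fact with probability $1/2$).

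Given this, I would define the following adversary: at any query $v_i$ it returns, say, the clockwise neighbor $v_{i+1}$ and defers committing to an actual target placement. Call a pair \emph{alive} after $k$ queries if neither of its two vertices has yet been queried; since each query kills at most one pair, at least $n/2-k$ pairs remain alive. By the geometric observation above, every pair-configuration built from $c$ alive pairs remains consistent with the entire transcript of responses produced by the adversary.

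Finally, I would argue that the algorithm may legitimately output a vertex $v^{\ast }$ as a target only if $v^{\ast }$ lies in \emph{every} configuration still consistent with the transcript. Since each vertex belongs to at most one antipodal pair, a pair-configuration that avoids $v^{\ast }$ exists whenever we can pick $c$ alive pairs different from the one containing $v^{\ast }$, i.e., whenever there are at least $c+1$ alive pairs. This is guaranteed as long as $k\leq n/2-c-1$, so after fewer than $n/2-c$ queries no vertex can be certified as a target and the algorithm must continue, yielding the claimed lower bound. The step I expect to be the main point of care, rather than a deep obstacle, is the clean verification that both responses remain consistent with all alive pair-configurations under the unbiased model, together with a small adjustment for odd $n$ where the ``antipode'' is not unique: using one of the two diametrically furthest vertices as the companion still kills at most one pair per query and the same counting absorbs the $O(1)$ rounding into the stated bound.
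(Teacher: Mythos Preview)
Your proposal is correct and follows essentially the same route as the paper: both arguments work on the even cycle, restrict the adversary to target sets that are unions of $c$ antipodal pairs, use the observation that from any non-target vertex exactly half of the targets lie clockwise and half counterclockwise (so every response is consistent with every such configuration), and then count surviving antipodal pairs to conclude that fewer than $n/2-c$ queries always leave at least $c+1$ pairs untouched. Your write-up is in fact slightly more explicit than the paper's: you fix a concrete response transcript (always clockwise) and verify it has positive probability under every alive pair-configuration, whereas the paper only states that ``the response of the query at $v_x$ is exactly the same'' for all such configurations; you also sketch the odd-$n$ adjustment, which the paper omits by silently taking $n=2k$.
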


\begin{proof}
Let $T=\{t_{1},t_{2},\ldots ,t_{2c}\}$ be the set of targets. Again, let $G$ be
the unweighted cycle with $n=2k$ vertices $v_{0},v_{1},\ldots ,v_{2k-1}$.
For each $i\in \{1,2,\ldots ,c\}$ the targets $\{t_{i},t_{i+c}\}$ are placed
by the adversary on two anti-diametrical vertices of the cycle, i.e.~$%
t_{i}=v_{j}$ and $t_{i+c}=v_{j+k}$, for some $j\in \{0,1,\ldots ,2k-1\}$.
Thus, for any vertex $v_{x}\notin T$, the unbiased query at $v_{x}$ returns $%
v_{x-1}$ with probability $\frac{1}{2}$ and $v_{x+1}$ with probability $%
\frac{1}{2}$. That is, for each vertex $v_{x}\notin T$ the response of the
query at $v_{x}$ is exactly the same. Let $\mathcal{A}$ be a deterministic algorithm that
queries at most $k-c-1$ different vertices. Then there exist at least $c+1$
pairs $\{v_{i_{1}},v_{i_{1}+k}\},\{v_{i_{2}},v_{i_{2}+k}\},\ldots
,\{v_{i_{c}},v_{i_{c}+k}\}$ of anti-diametrical vertices such that none of
these vertices is queried by the algorithm. Then the adversary can place the 
$2c$ targets any $c$ of these $c+1$ pairs of anti-diametrical vertices,
without affecting the validity of the previous answers. Thus the algorithm $%
\mathcal{A}$ needs to query at least $k-c=\frac{n}{2}-c$ different vertices
to detect a target.
\end{proof}

\begin{corollary}
\label{cor:iuarcycle}Let $p_{1}=p_{2}=\frac{1}{2}$. Then any deterministic
(possibly adaptive) algorithm needs at least $\frac{n}{2}-1$ queries to
detect one of the two targets, even in an unweighted cycle.
\end{corollary}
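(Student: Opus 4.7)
The plan is to observe that this corollary is simply the special case $c=1$ of Theorem~\ref{thm:lb-c-targets}, which has already been established. With $c=1$, the theorem deals with $2c=2$ targets $T=\{t_1,t_2\}$ and sets $p_i = \frac{1}{2c} = \frac{1}{2}$ for $i \in \{1,2\}$, which matches exactly the hypotheses of the corollary. The conclusion of the theorem then yields a lower bound of $\frac{n}{2}-c = \frac{n}{2}-1$ queries to detect at least one target in an unweighted cycle, which is precisely the statement of the corollary.

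Therefore the proof is a one-line invocation: instantiate Theorem~\ref{thm:lb-c-targets} with $c=1$ and the result follows. There is no real obstacle here; the substance of the argument (the adversarial placement of antipodal target pairs on the cycle so that every non-target vertex gives an indistinguishable unbiased response, combined with a counting argument on unqueried antipodal pairs) has already been carried out in the proof of Theorem~\ref{thm:lb-c-targets} in full generality. The corollary is stated separately only because the two-target case is the primary setting of Section~\ref{sec:two-targets} and deserves explicit mention.
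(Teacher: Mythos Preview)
Your proposal is correct and matches the paper's approach exactly: the corollary is stated immediately after Theorem~\ref{thm:lb-c-targets} with no separate proof, precisely because it is the special case $c=1$.
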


\section{More Informative Queries for Two Targets\label{sec:other-queries}}

A natural alternative to obtain query-efficient algorithms for multiple 
targets, instead of restricting the maximum degree $\Delta $ of the graph
(see Section~\ref{biased-subsec}), is to consider queries that provide more
informative responses in general graphs. As we have already observed in
Section~\ref{biased-subsec}, it is not clear whether it is possible to
detect multiple targets with $O(poly\log n)$ \emph{direction queries}
in an arbitrary graph. In this section we investigate natural variations and
extensions of the direction query for multiple targets which we studied
in Section~\ref{sec:two-targets}.

\subsection{Direction-Distance Biased Queries\label%
{direction-distance-subsec}}

In this section we strengthen the direction query in a way that it also
returns the value of the distance between the queried vertex and one of the
targets. More formally, a \emph{direction-distance query} at vertex $v\in V$
returns with probability $p_{i}$ a pair $(u,\ell )$, where $u\in N(v)$ such
that $t_{i}\in N(u,v)$ and $d(v,t_{i})=\ell $. Note that here we impose
again that all $p_{i}$'s are constant and that $\sum_{i=1}^{|T|}p_{i}=1$,
where $T=\{t_{1},t_{2},\ldots ,t_{|T|}\}$ is the set of targets. We will say
that the response $(u,\ell )$ to a direction-distance query at vertex $v$ 
\emph{directs to} $t_{i}$ if $t_{i}\in N(v,u)$ and $\ell =d(v,t_{i})$.
Similarly to our assumptions on the direction query, whenever there exist
more than one such vertices $u\in N(v)$ leading to $t_{i}$ via a shortest
path, the direction-distance query returns an arbitrary vertex $u$ among
them (possibly chosen adversarially). Moreover, if the queried vertex $v$ is
equal to one of the targets $t_{i}\in T$, this is revealed by the query with
probability $p_{i}$. These direction-distance queries have also been used in~%
\cite{EKS16} for detecting one single target in directed graphs.

Here we consider the case of two targets and \emph{biased queries}, i.e.~$%
T=\{t_{1},t_{2}\}$ where $p_{1}>p_{2}$. Similarly to Section~\ref%
{biased-subsec}, initially we can detect the first target $t_{1}$ with high
probability in $O(\log n)$ queries using the ``noisy'' model of~\cite{EKS16}. Thus, in what follows we
assume that $t_{1}$ has already been detected. We will show that the second
target $t_{2}$ can be detected with high probability with $O(\log ^{3}n)$
additional direction-distance queries using Algorithm~\ref{alg:dir-dist}.
The high level description of our algorithm is the following. We maintain a
candidates' set $S$ such that at every iteration $t_{2}\in S$ with high
probability. Each time we update the set $S$, its size decreases by a
constant factor. Thus we need to shrink the set $S$ at most $\log n$ times.
In order to shrink $S$ one time, we first compute an $(1+\varepsilon )$%
-median $v$ of the current set $S$ and we query $\log n$ times this vertex $%
v $. Denote by $Q(v)$ the set of all different responses of these $\log n$
direction-distance queries at $v$. As it turns out, the responses in $Q(v)$
might not always be enough to shrink $S$ such that it still contains $t_{2}$
with high probability. For this reason we also query $\log n$ times each of
the $\log n$ neighbors $u\in N(v)$, such that $(u,\ell )\in Q(v)$ for some $%
\ell \in \mathbb{N}$. After these $\log ^{2}n$ queries at $v$ and its
neighbors, we can safely shrink $S$ by a constant factor, thus detecting the
target $t_{2}$ with high probability in $\log ^{3}n$ queries.

For the description of our algorithm (see Algorithm~\ref{alg:dir-dist})
recall that, for every vertex $v$, the set $E_{t_{1}}(v)=\{u\in
N(v):t_{1}\in N(v,u)\}$ contains all neighbors of $v$ such that the edge $uv$
lies on a shortest path from $v$ to~$t_{1}$.

\begin{algorithm}[htb]
\caption{Given $t_{1}$, detect $t_{2}$ with high probability with $O(\log^3 n)$ direction-distance queries} \label{alg:dir-dist}
\begin{algorithmic}[1]

\STATE{$S \leftarrow V$}

\WHILE{$|S|>1$}\label{line-2-alg-2}
     \STATE{Compute an (approximate) median $v$ of $S$ with respect to potential $\Gamma$; \ Compute $E_{t_{1}}(v)$}
     \STATE{Query $\log n$ times vertex $v$; \ Compute the set $Q(v)$ of different query responses}
     \IF{there exists a pair $(u,\ell) \in Q(v)$ such that $u \notin E_{t_{1}}(v)$ or $\ell \neq d(v,t_{1})$} \label{step:det-update-3a}
          \STATE{$S \leftarrow S \cap N(v,u)$} \label{step:det-update-3}
     \ELSE
          \FOR{every $(u,\ell)\in Q(v)$} \label{step:det-else-for}
               \STATE{Query $\log n$ times vertex $u$; \ Compute the set $Q(u)$ of different query responses}
               
               \IF{for every $(z,\ell') \in Q(u)$ we have $\ell' = \ell - w(vu)$} \label{step:det-before-else-end}
                    \STATE{$S \leftarrow S \cap N(v,u)$; \ Goto line~\ref{line-2-alg-2}} \label{step:det-else-end}
               \ENDIF
          \ENDFOR
     \ENDIF
\ENDWHILE

\medskip

\RETURN{the unique vertex of $S$}
\end{algorithmic}
\end{algorithm}

In the next theorem we prove the correctness and the running time of
Algorithm~\ref{alg:dir-dist}.

\begin{theorem}
\label{thm:alg-dir-dis}Given $t_{1}$, Algorithm~\ref%
{alg:dir-dist} detects $t_{2}$ in at most $O(\log ^{3}n)$ queries with
probability at least $1-O\left( \log n\cdot p_{1}^{\log n}\right) $.
\end{theorem}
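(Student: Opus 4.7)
The plan is to decompose the analysis into three ingredients: (i) the number of outer while-iterations, (ii) the number of queries spent per outer iteration, and (iii) a correctness argument showing that each update of $S$ preserves $t_{2}$ with high probability. The bound on outer iterations comes straight from Theorem~\ref{thm:approx-gamma-potential}: since $v$ is an (approximate) median of $S$ with respect to the potential $\Gamma$, each update $S\leftarrow S\cap N(v,u)$ shrinks $|S|$ by a constant factor, so after $O(\log n)$ iterations we reach $|S|=1$. Per iteration we spend $\log n$ queries at $v$ plus at most $|Q(v)|\leq \log n$ inner loops each costing $\log n$ queries, i.e.\ $O(\log^{2} n)$ queries. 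Multiplying gives the $O(\log^{3} n)$ query bound.

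The heart of the proof is establishing that each update keeps $t_{2}\in S$. For the IF branch (line~\ref{step:det-update-3a}): if $(u,\ell)\in Q(v)$ has $u\notin E_{t_{1}}(v)$ or $\ell\neq d(v,t_{1})$, then this response could not have directed to $t_{1}$, so it directed to $t_{2}$; hence $t_{2}\in N(v,u)$ and the update is \emph{deterministically} correct. For the ELSE branch, every response at $v$ is of the form $(u,d(v,t_{1}))$ with $u\in E_{t_{1}}(v)$. With probability at least $1-p_{1}^{\log n}$ at least one of the $\log n$ queries at $v$ actually directs to $t_{2}$; conditioned on this, the matching responses force $d(v,t_{2})=d(v,t_{1})$ (call this common value $d$), and some vertex $u^{\star}\in E_{t_{2}}(v)\cap Q(v)$ is present.

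The delicate step is proving that the inner test at line~\ref{step:det-before-else-end} discriminates between the ``safe'' $u\in E_{t_{2}}(v)$ and ``unsafe'' $u\in E_{t_{1}}(v)\setminus E_{t_{2}}(v)$. For a safe $u$, $d(u,t_{1})=d(u,t_{2})=d-w(vu)$, so every one of the $\log n$ responses at $u$ (whichever target it directs to) reports distance $\ell-w(vu)$, and the test always passes, correctly yielding $t_{2}\in S\cap N(v,u)$. For an unsafe $u$, the triangle inequality is strict, giving $d(u,t_{2})>d(v,t_{2})-w(vu)=\ell-w(vu)=d(u,t_{1})$; hence a response directing to $t_{2}$ necessarily reports a distance different from $\ell-w(vu)$. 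The test at $u$ is therefore passed only if all $\log n$ inner queries direct to $t_{1}$, an event of probability $p_{1}^{\log n}$. A union bound over the at most $\log n$ vertices in $Q(v)$ shows that the probability of an erroneous update in a single outer iteration is at most $O(\log n\cdot p_{1}^{\log n})$, since the dominating failure mode is a single unsafe $u$ being spuriously selected before the safe $u^{\star}$ is examined.

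A final union bound over the $O(\log n)$ outer iterations absorbs the leading polylog factor into the $O(\cdot)$ and yields a success probability of at least $1-O(\log n\cdot p_{1}^{\log n})$, completing the proof. The main obstacle I expect is the bookkeeping in the ELSE branch: one must carefully separate the event ``all outer queries directed to $t_{1}$'' (which prevents us from certifying $d(v,t_{1})=d(v,t_{2})$) from the event ``the inner test at an unsafe $u$ is spuriously satisfied,'' and verify that the strict triangle inequality $d(u,t_{2})>d(u,t_{1})$ for $u\notin E_{t_{2}}(v)$ really does make the second event detectable by a single query directing to $t_{2}$ at $u$.
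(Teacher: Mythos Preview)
Your proposal is correct and follows essentially the same approach as the paper: the same decomposition into outer-iteration count, per-iteration query cost, and the IF/ELSE correctness analysis, together with the key observation that for an unsafe $u\in E_{t_{1}}(v)\setminus E_{t_{2}}(v)$ one has $d(u,t_{2})>\ell-w(vu)=d(u,t_{1})$, so the inner test is spuriously passed only when all $\log n$ queries at $u$ direct to $t_{1}$, an event of probability $p_{1}^{\log n}$. One minor bookkeeping point: your inner union bound gives $O(\log n\cdot p_{1}^{\log n})$ per outer iteration and hence $O(\log^{2} n\cdot p_{1}^{\log n})$ overall, which cannot literally be ``absorbed'' into $O(\log n\cdot p_{1}^{\log n})$; the paper obtains the stated bound by (somewhat loosely) treating the per-iteration failure probability as $1-(1-p_{1}^{\log n})^{2}\leq 2p_{1}^{\log n}$ rather than your more careful $\log n\cdot p_{1}^{\log n}$.
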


\begin{proof}
Throughout its execution, Algorithm~\ref{alg:dir-dist} maintains a vertex
set $S$ that contains the second target $t_{2}$ with high probability.
Initially $S=V$. Let $v$ be an $(1+\varepsilon )$-median of the set $S$
(with respect to the potential $\Gamma $ of Section~\ref%
{our-potential-subsec}) at some iteration of the algorithm, and assume that $%
t_{2}\in S$. We query $\log n$ times vertex $v$; let $Q(v)$ be the set of
all different query responses. Since each query directs to $t_{1}$ with
probability $p_{1}$ and to $t_{2}$ with probability $p_{2}$, it follows that
at least one of the queries at $v$ directs to $t_{2}$ with probability at
least $1-p_{1}^{\log n}$.

Consider a response-pair $(u,\ell )\in Q(v)$. If this query directs to $%
t_{1} $, then $u\in E_{t_{1}}(v)$ and $\ell =d(v,t_{1})$. Hence, if we
detect at least one response pair $(u,\ell )\in Q(v)$ such that $u\notin
E_{t_{1}}(v)$ or $\ell \neq d(v,t_{1})$, we can safely conclude that this
query directs to $t_{2}$ (lines~\ref{step:det-update-3a}-\ref%
{step:det-update-3} of Algorithm~\ref{alg:dir-dist}). Therefore, in this
case, $u\in E_{t_{2}}(v)=\{u\in N(v):t_{2}\in N(v,u)\}$, and thus we safely
compute the updated set $S\cap N(v,u)$ at line~\ref{step:det-update-3}.

Assume now that $u\in E_{t_{1}}(v)$ and $\ell =d(v,t_{1})$ for every
response-pair $(u,\ell )\in Q(v)$\ (see lines~\ref{step:det-else-for}-\ref%
{step:det-else-end} of the algorithm). Then every query at $v$ directs to $%
t_{1}$. However, as we proved above, at least one of these queries $(u,\ell
)\in Q(v)$ also directs to $t_{2}$ (i.e.~$u\in E_{t_{2}}(v)$) with
probability at least $1-p_{1}^{\log n}$. Therefore $\ell
=d(v,t_{1})=d(v,t_{2})$ with probability at least $1-p_{1}^{\log n}$. Note
that, in this case, we can not use only the response-pairs of $Q(v)$ to
distinguish which query directs to $t_{2}$.

In our attempt to detect at least one vertex $u\in E_{t_{2}}(v)$, we query $%
\log n$ times each of vertices $u$ such that $(u,\ell )\in Q(v)$. For each
such vertex $u$ denote by $Q(u)$ the set of all different response-pairs
from these $\log n$ queries at $u$. Similarly to the above, at least one of
these $\log n$ queries at $u$ directs to $t_{2}$ with probability at least $%
1-p_{1}^{\log n}$. Recall that $d(v,t_{2})=\ell $ and let $(z,\ell ^{\prime
})\in Q(u)$. If $u\in E_{t_{2}}(v)$ then $d(u,t_{2})=\ell -w(vu)$, otherwise 
$d(u,t_{2})>\ell -w(vu)$. Furthermore note that $d(u,t_{1})=\ell -w(vu)$,
since $u\in E_{t_{1}}(v)$. Therefore, if we detect at least one
response-pair $(z,\ell ^{\prime })\in Q(u)$ such that $\ell ^{\prime }>\ell
-w(vu)$, then we can safely conclude that $u\notin E_{t_{2}}(v)$. Otherwise,
if for every response-pair $(z,\ell ^{\prime })\in Q(u)$ we have that $\ell
^{\prime }=\ell -w(vu)$, then $u\in E_{t_{2}}(v)$ (i.e.~$t_{2}\in N(v,u)$) 
with probability at least $1-p_{1}^{\log n}$.

Recall that there exists at least one query at $v$ that directs to $t_{2}$
with probability at least $1-p_{1}^{\log n}$, as we proved above. That is,
among all response-pairs $(u,\ell )\in Q(v)$ there exists at least one
vertex $u\in E_{t_{2}}(v)$ with probability at least $1-p_{1}^{\log n}$.
Therefore, we will correctly detect a vertex $u\in E_{t_{2}}(v)$ at lines %
\ref{step:det-before-else-end}-\ref{step:det-else-end} of the algorithm with
probability at least $\left( 1-p_{1}^{\log n}\right) ^{2}$, i.e.~with at
least this probability the updated candidates' set at line~\ref%
{step:det-else-end} still contains $t_{2}$. Thus, since we shrink the
candidates' set $\frac{\log n}{1-\log (1+\varepsilon )}=O(\log n)$ times, we
eventually detect $t_{2}$ as the unique vertex in the final candidates' set
with probability at least $\left( 1-p_{1}^{\log n}\right) ^{O(\log n)}\geq
1-O(\log n\cdot p_{1}^{\log n})$ by Bernoulli's inequality. Finally, it is
easy to verify from the above that the algorithm will terminate after at
most $O(\log ^{3}n)$ queries with probability at least $1-O(\log n\cdot
p_{1}^{\log n})$.
\end{proof}

\subsection{Vertex-Direction and Edge-Direction Biased Queries\label%
{direction-edge-subsec}}

An alternative natural variation of the direction query is to \emph{query an
edge} instead of querying a vertex. More specifically, the direction query
(as defined in Section~\ref{our-model-subsec}) queries a vertex $v\in V$ and
returns with probability $p_{i}$ a neighbor $u\in N(v)$ such that $t_{i}\in
N(u,v)$. Thus, as this query always queries a vertex, it can be also
referred to as a \emph{vertex-direction query}. Now we define the \emph{%
edge-direction query} as follows: it queries an ordered pair of adjacent
vertices $(v,u)$ and it returns with probability $p_{i}$ \texttt{YES} (resp.~\texttt{NO}) if $t_{i}\in N(v,u)$ (resp.~if $t_{i}\notin N(v,u)$). Similarly
to our notation in the case of vertex-direction queries, we will say that
the response \texttt{YES} (resp.~\texttt{NO}) to an edge-direction query at
the vertex pair $(v,u)$ \emph{refers} to $t_{i}$ if $t_{i}\in N(v,u)$ (resp.
if $t_{i}\notin N(v,u)$). Similar but different edge queries for detecting
one single target on trees have been investigated in~\cite{EKS16, IRV88,
N09, S89}.

Here we consider the case where both vertex-direction and edge-direction
queries are available to the algorithm, and we focus again to the case of
two targets and \emph{biased queries}, i.e.~$T=\{t_{1},t_{2}\}$ where $%
p_{1}>p_{2}$. Similarly to Sections~\ref{biased-subsec} and~\ref%
{direction-distance-subsec}, we initially detect $t_{1}$ with high
probability in $O(\log n)$ vertex-direction queries using the
``noisy'' model of~\cite{EKS16}. Thus, in
the following we assume that $t_{1}$ has already been detected. 
 We will show that Algorithm~\ref{alg:vertex-edge-dir} detects the second target $t_{2}$
with high probability using $O(\log ^{2}n)$ additional vertex-direction
queries and $O(\log ^{3}n)$ edge--direction queries, i.e.~in total $O(\log ^{3}n)$ queries.

\begin{algorithm}[htb]
\caption{Given $t_{1}$, detect $t_{2}$ with high probability with $O(\log^3 n)$ vertex-direction and edge-direction queries} \label{alg:vertex-edge-dir}
\begin{algorithmic}[1]

\STATE{$S \leftarrow V$} \label{alg-3-line-1}

\WHILE{$|S|>1$} \label{alg-3-line-2}
     \STATE{Compute an (approximate) median $v$ of $S$ with respect to potential $\Gamma$; \ Compute $E_{t_{1}}(v)$} \label{alg-3-line-3}
     \STATE{Apply $\log n$ vertex-direction queries at vertex $v$; \ Compute the set $Q(v)$ of different query responses} \label{alg-3-line-4}
     \IF{there exists a vertex $u \in Q(v)$ such that $u \notin E_{t_{1}}(v)$} \label{alg-3-line-5}
          \STATE{$S \leftarrow S \cap N(v,u)$} \label{alg-3-line-6}
     \ELSE \label{alg-3-line-7}
          \FOR{every $u\in Q(v)$} \label{alg-3-line-8}
               \STATE{Apply $\log n$ edge-direction queries at $(v,u)$; \ Compute the set $Q(v,u)$ of different query responses}\label{alg-3-line-9}%
               \IF{$Q(v,u)=\{\text{\texttt{YES}}\}$} \label{alg-3-line-10}
                    \STATE{$S \leftarrow S \cap N(v,u)$; \ Goto line~\ref{alg-3-line-2}} \label{alg-3-line-11}
               \ENDIF
          \ENDFOR
     \ENDIF
\ENDWHILE

\medskip

\RETURN{the unique vertex of $S$}
\end{algorithmic}
\end{algorithm}

In the next theorem we prove the correctness and the running time of
Algorithm~\ref{alg:vertex-edge-dir}.

\begin{theorem}
\label{thm:alg-vertex-edge-dir}Given $t_{1}$, 
 Algorithm~\ref{alg:vertex-edge-dir} detects
$t_{2}$ in at most $O(\log ^{2}n)$
vertex-direction queries and $O(\log ^{3}n)$ edge--direction queries with
probability at least $1-O(\log n\cdot p_{1}^{\log n})$.
\end{theorem}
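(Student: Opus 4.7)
My plan is to mirror the argument of Theorem~\ref{thm:alg-dir-dis}, exploiting the potential $\Gamma$ together with the new edge-direction queries. Since the algorithm queries an $(1+\varepsilon)$-median of $S$ with respect to $\Gamma$ at every iteration, Theorem~\ref{thm:approx-gamma-potential} guarantees that $|S|$ shrinks by a factor of $\frac{1+\varepsilon}{2}$ per correct iteration, so $O(\log n)$ iterations suffice. Within one iteration we use $\log n$ vertex-direction queries at the median $v$, and, in the worst case, $\log n$ edge-direction queries at each of the at most $|Q(v)|\le \log n$ neighbors in $Q(v)$. This already yields the claimed totals of $O(\log^{2}n)$ vertex-direction and $O(\log^{3}n)$ edge-direction queries.

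To control the probability of error within a single iteration (assuming $t_{2}\in S$), I would distinguish two cases. If some response $u\in Q(v)$ already lies outside $E_{t_{1}}(v)$, the algorithm takes the branch at line~\ref{alg-3-line-6}; deterministically $u\in E_{t_{2}}(v)$, so the update is safe. Otherwise $Q(v)\subseteq E_{t_{1}}(v)$, and with probability at least $1-p_{1}^{\log n}$ at least one of the $\log n$ vertex-direction queries at $v$ directed to $t_{2}$, in which case the returned vertex $u^{\star}\in E_{t_{1}}(v)\cap E_{t_{2}}(v)$ appears in $Q(v)$. Because both $t_{1}$ and $t_{2}$ lie in $N(v,u^{\star})$, every edge-direction query at $(v,u^{\star})$ returns \texttt{YES} with probability $1$, so $Q(v,u^{\star})=\{\texttt{YES}\}$ deterministically and the algorithm updates correctly at line~\ref{alg-3-line-11}.

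The remaining concern is a false positive triggered by some $u\in Q(v)$ with $u\in E_{t_{1}}(v)\setminus E_{t_{2}}(v)$: for such $u$, each edge-direction query at $(v,u)$ returns \texttt{NO} independently with probability $p_{2}$, so the probability that all $\log n$ responses equal \texttt{YES} is at most $p_{1}^{\log n}$. A union bound over the at most $\log n$ candidates in $Q(v)$, combined with the probability $p_{1}^{\log n}$ that no vertex-direction query directed to $t_{2}$, yields per-iteration success probability $1-O(\log n\cdot p_{1}^{\log n})$; applying Bernoulli's inequality over the $O(\log n)$ iterations then gives the overall bound stated in the theorem.

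The main conceptual obstacle, compared with the direction-only setting of Section~\ref{biased-subsec}, is precisely the worst case $E_{t_{2}}(v)\subseteq E_{t_{1}}(v)$, in which vertex-direction queries alone cannot tell which returned neighbor lies on a shortest path to $t_{2}$. The edge-direction query is tailored to this situation: fixing the ordered pair $(v,u)$ turns each response into a yes/no test whose \texttt{NO} outcome occurs independently with constant probability $p_{2}$ whenever $t_{2}\notin N(v,u)$, which is exactly what is needed to separate $E_{t_{1}}(v)\cap E_{t_{2}}(v)$ from $E_{t_{1}}(v)\setminus E_{t_{2}}(v)$ in $O(\log n)$ queries per neighbor, thereby bypassing the $\Delta$-dependence of Theorem~\ref{thm:two-targets-bounded}.
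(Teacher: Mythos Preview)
Your argument is correct and follows essentially the same route as the paper's: the same case split at the median $v$, the same use of $\log n$ edge-direction queries per neighbor $u\in Q(v)$, and the same Bernoulli aggregation over the $O(\log n)$ iterations. You are in fact more careful than the paper on one point: you explicitly union-bound the false-positive event over all $u\in Q(v)\cap\bigl(E_{t_1}(v)\setminus E_{t_2}(v)\bigr)$, whereas the paper only bounds a single such $u$ and states per-iteration success $(1-p_1^{\log n})^2$. One small slip: with your per-iteration failure probability $O(\log n\cdot p_1^{\log n})$, Bernoulli over $O(\log n)$ iterations yields $1-O(\log^2 n\cdot p_1^{\log n})$ rather than the $1-O(\log n\cdot p_1^{\log n})$ stated in the theorem; this extra $\log n$ factor is immaterial since both quantities are $o(1)$, but it is worth being aware of.
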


\begin{proof}
The proof follows a similar approach as the proof of Theorem~\ref%
{thm:alg-dir-dis}. Throughout its execution, Algorithm~\ref%
{alg:vertex-edge-dir} maintains a vertex set $S$ that contains the second
target $t_{2}$ with high probability. Initially $S=V$. Let $v$ be an $%
(1+\varepsilon )$-median of the set $S$ (with respect to the potential $%
\Gamma $ of Section~\ref{our-potential-subsec}) at some iteration of the
algorithm, and assume that $t_{2}\in S$. We query $\log n$ times vertex $v$;
let $Q(v)$ be the set of all different query responses. Similarly to the
analysis of Algorithm~\ref{alg:dir-dist} in the proof of Theorem~\ref%
{thm:alg-dir-dis}, at least one of the queries at $v$ directs to $t_{2}$
with probability at least $1-p_{1}^{\log n}$.

Consider a response-vertex $u\in Q(v)$. If this query directs to $t_{1}$,
then $u\in E_{t_{1}}(v)$. Hence, if we detect at least one $u\in Q(v)$ such
that $u\notin E_{t_{1}}(v)$, we can safely conclude that this query directs
to $t_{2}$ (lines~\ref{alg-3-line-5}-\ref{alg-3-line-6} of Algorithm~\ref%
{alg:vertex-edge-dir}). Therefore, in this case, $u\in E_{t_{2}}(v)=\{u\in
N(v):t_{2}\in N(v,u)\}$, and thus we safely compute the updated set $S\cap
N(v,u)$ at line~\ref{alg-3-line-6}.

Assume now that $u\in E_{t_{1}}(v)$ for every response $u\in Q(v)$ (see
lines~\ref{alg-3-line-8}-\ref{alg-3-line-11} of the algorithm). Then every
query at $v$ directs to $t_{1}$, although at least one of them also directs
to $t_{2}$ (i.e.~$Q(v)\cap E_{t_{2}}(v)\neq \emptyset $) with probability at
least $1-p_{1}^{\log n}$, as we proved above. Note that, in this case, we
can not use only the vertices of $Q(v)$ to distinguish which query directs
to $t_{2}$.

In our attempt to detect at least one vertex $u\in E_{t_{2}}(v)$, we apply $%
\log n$ edge-direction queries at each of the ordered pairs $(v,u)$, where $%
u\in Q(v)$. For each such pair $(v,u)$ denote by $Q(v,u)$ the set of all
different \texttt{YES}/\texttt{NO} responses from these $\log n$ queries at $%
(v,u)$. Similarly to the above, at least one of these $\log n$ queries at $%
(v,u)$ refers to $t_{2}$ with probability at least $1-p_{1}^{\log n}$.
Therefore, if \texttt{NO}$\in Q(v,u)$, then we can safely conclude that $%
u\notin E_{t_{2}}(v)$. Otherwise, if $Q(v,u)=\{\mathtt{YES}\}$, then $u\in
E_{t_{2}}(v)$ (i.e.~$t_{2}\in N(v,u)$) with probability at least $%
1-p_{1}^{\log n}$.

Recall that there exists at least one query at $v$ that directs to $t_{2}$
with probability at least $1-p_{1}^{\log n}$. That is, among all responses
in $Q(v)$ there exists at least one vertex $u\in E_{t_{2}}(v)$ with
probability at least $1-p_{1}^{\log n}$. Therefore, we will correctly detect
a vertex $u\in E_{t_{2}}(v)$ at lines~\ref{alg-3-line-10}-\ref{alg-3-line-11}
of the algorithm with probability at least $\left( 1-p_{1}^{\log n}\right)
^{2}$, i.e.~with at least this probability the updated candidates' set at
line~\ref{alg-3-line-11} still contains $t_{2}$. Thus, similarly to the
proof of Theorem~\ref{thm:alg-dir-dis}, we eventually detect $t_{2}$ as the
unique vertex in the final candidates' set with probability at least $%
1-O(\log n\cdot p_{1}^{\log n})$. Finally, it is easy to verify from the
above that the algorithm will terminate after at most $O(\log ^{2}n)$
vertex-direction queries and $\log ^{3}n$ edge--direction queries with
probability at least $1-O(\log n\cdot p_{1}^{\log n})$.
\end{proof}

\subsection{Two-Direction Queries\label{two-direction-subsec}}

In this section we consider another variation of the direction query that
was defined in Section~\ref{our-model-subsec} (or ``vertex-direction query'' in the terminology of Section~\ref%
{direction-edge-subsec}), which we call \emph{two-direction query}.
Formally, a two-direction query at vertex $v$ returns an \emph{unordered pair%
} of (not necessarily distinct) vertices $\{u,u^{\prime }\}$ such that $%
t_{1}\in N(v,u)$ and $t_{2}\in N(v,u^{\prime })$. Note here that, as $%
\{u,u^{\prime }\}$ is an unordered pair, the response of the two-direction
query does not clarify which of the two targets belongs to $N(v,u)$ and
which to $N(v,u^{\prime })$.

Although this type of query may seem at first to be more informative than
the standard direction query studied in Section~\ref{sec:two-targets}, we
show that this is not the case. Intuitively, this type of query resembles
the unbiased direction query of Section~\ref{unbiased-subsec}. To see this,
consider e.g.~the unweighted cycle where the two targets are placed at two
anti-diametrical vertices; then, applying many times the unbiased direction
query of Section~\ref{unbiased-subsec} at any specific vertex $v$ reveals
with high probability the same information as applying a single
two-direction query at $v$. Based on this intuition the next theorem can be
proved with exactly the same arguments as Theorem~\ref{thm:lb-c-targets} of
Section~\ref{unbiased-subsec}.

\begin{theorem}
\label{thm:both-directions}Any deterministic (possibly adaptive) algorithm
needs at least $\frac{n}{2}-1$ two-direction queries to detect one of the
two targets, even in an unweighted cycle.
\end{theorem}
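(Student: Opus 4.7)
The plan is to mimic the adversarial construction used in the proof of Theorem~\ref{thm:lb-c-targets}, exploiting the fact that on an unweighted cycle with the two targets placed at anti-diametrical positions, a two-direction query yields a response that does not depend on the specific anti-diametrical pair chosen by the adversary.

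Concretely, I would take $G$ to be the unweighted cycle on $n=2k$ vertices $v_{0},v_{1},\ldots ,v_{2k-1}$, and consider adversarial placements of the form $t_{1}=v_{j}$, $t_{2}=v_{j+k}$ (indices mod $2k$). The first key claim is that for every non-target vertex $v_{x}$, the two-direction query at $v_{x}$ returns the unordered pair $\{v_{x-1},v_{x+1}\}$, independently of the choice of $j$. To see this, observe that on the cycle the unique vertex at distance $k$ from $t_{1}$ is $t_{2}$ itself (and vice versa), so for any $v_{x}\notin T$ the shortest path from $v_{x}$ to $t_{1}$ uses a unique neighbor; moreover since $t_{1}$ and $t_{2}$ are anti-diametrical, the two shortest-path-initial neighbors for $t_{1}$ and $t_{2}$ are the two distinct neighbors $v_{x-1}$ and $v_{x+1}$ of $v_{x}$. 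Hence the unordered pair $\{v_{x-1},v_{x+1}\}$ is the only possible response, no matter where the anti-diametrical pair is placed.

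Given this, the adversarial counting is identical to the one in Theorem~\ref{thm:lb-c-targets} with $c=1$. I would partition $V$ into the $k$ anti-diametrical pairs $\{v_{i},v_{i+k}\}$. Let $\mathcal{A}$ be any deterministic (adaptive) algorithm that issues at most $k-2=\frac{n}{2}-2$ two-direction queries. Since each queried vertex belongs to exactly one anti-diametrical pair, at least $k-(k-2)=2$ pairs contain no queried vertex. Because every response at a non-target vertex is forced to be $\{v_{x-1},v_{x+1}\}$ as argued above, the transcript of $\mathcal{A}$ is consistent with placing $\{t_{1},t_{2}\}$ on either of these $\geq 2$ untouched pairs; these placements put the targets at disjoint vertices, so $\mathcal{A}$ cannot commit to the location of even a single target. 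Thus $\mathcal{A}$ requires at least $k-1=\frac{n}{2}-1$ queries.

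The main obstacle is really just the first key claim, which is a small case analysis on the cycle (uniqueness of shortest paths between non-antipodal vertices) — everything else is a direct rerun of the pigeonhole argument in Theorem~\ref{thm:lb-c-targets}. No additional ideas beyond that structural observation are needed, which justifies the paper's remark that the proof proceeds by the same argument.
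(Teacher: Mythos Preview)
Your proposal is correct and follows exactly the approach the paper indicates: specialize the anti-diametrical adversary of Theorem~\ref{thm:lb-c-targets} to $c=1$, observing that on the cycle the two-direction query at any non-target $v_{x}$ is forced to return $\{v_{x-1},v_{x+1}\}$ regardless of which anti-diametrical pair hosts the targets, and then apply the same pigeonhole count. Nothing is missing.
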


\subsection{Restricted Set Queries\label{restricted-queries-subsec}}

The last type of queries we consider is when the query is applied not only
to a vertex $v$ of the graph, but also to a subset $S\subseteq V$ of the
vertices, and the response of the query is a vertex $u\in N(v)$ such that $%
t\in N(v,u)$ for at least one of the targets $t$ that belong to the set $S$.
Formally, let $T$ be the set of targets. The \emph{restricted-set query} at
the pair $(v,S)$, where $v\in V$ and $S\subseteq V$ such that $T\cap S\neq
\emptyset $, returns a vertex $u\in N(v)$ such that $t\in N(v,u)$ for at
least one target $t\in T\cap S$. If there exist multiple such vertices $u\in
N(v)$, the query returns one of them adversarially. Finally, if we query a
pair $(v,S)$ such that $T\cap S=\emptyset $, then the query returns
adversarially an arbitrary vertex $u\in N(v)$, regardless of whether the
edge $vu$ leads to a shortest path from $v$ to any target in $T$. That is,
the response of the query can be considered in this case as ``noise''.

In the next theorem we prove that this query is very powerful, as $|T|\cdot
\log n$ restricted-set queries suffice to detect all targets of the set $T$.

\begin{theorem}
\label{thm:restricted-set}Let $T$ be the set of targets. There exists an
adaptive deterministic algorithm that detects all targets of $T$ with at
most $|T|\cdot \log n$ restricted-set queries.
\end{theorem}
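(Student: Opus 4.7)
The plan is to reduce the problem to $|T|$ successive instances of single-target binary search. I maintain a set $D$ of already-discovered targets, initially $D=\emptyset$, and proceed in $|T|$ phases; each phase isolates exactly one new target using at most $\log n$ restricted-set queries.

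At the start of a phase, initialize a candidates' set $C:=V\setminus D$. Since $|D|<|T|$, we have $T\cap C=T\setminus D\neq \emptyset$, so a restricted-set query at any pair $(v,C)$ is well-defined. At each iteration, compute an exact median $v$ of $C$ with respect to the potential $\Gamma$ and issue the query $(v,C)$. The response $u\in N(v)$ witnesses at least one undiscovered target $t\in T\cap C$ with $t\in N(v,u)$, so the update $C\leftarrow C\cap N(v,u)$ still contains $t$ and $T\cap C$ remains non-empty for the next query. If along the way the queried median itself happens to be an undiscovered target, this is revealed exactly as for the standard direction query and the phase terminates early.

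Two invariants drive the analysis. First, $T\cap C\neq \emptyset$ is preserved throughout the phase, so every restricted-set query is valid. Second, by the argument in the proof of Theorem~\ref{thm:approx-gamma-potential} (specialized to $\varepsilon=0$), an exact median $v$ of $C$ with respect to $\Gamma$ satisfies $\Gamma_C(v)\leq |C|/2$, which gives $|C\cap N(v,u)|\leq |C|/2$ for every returned $u$. Consequently $|C|$ at least halves per iteration and reaches $1$ in at most $\log n$ iterations; by the first invariant the surviving vertex must be an undiscovered target, and we add it to $D$.

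Summing across the $|T|$ phases yields at most $|T|\cdot \log n$ queries in total. I do not foresee a real technical obstacle: the key observation is simply that restricting each query to $V\setminus D$ forces the response to direct toward a \emph{new} target, which in turn lets the median-shrinking argument of Theorem~\ref{thm:approx-gamma-potential} be applied verbatim to detect targets one-by-one.
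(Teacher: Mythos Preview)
Your proof is correct and follows essentially the same approach as the paper: both run $|T|$ phases of single-target binary search, initializing the candidates' set to $V$ minus the already-discovered targets, querying the $\Gamma$-median together with the current candidates' set, and applying the halving bound from Theorem~\ref{thm:approx-gamma-potential} to finish each phase in at most $\log n$ queries. Your write-up makes the two invariants (nonemptiness of $T\cap C$ and the halving of $|C|$) more explicit, but the argument is otherwise identical.
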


\begin{proof}
To detect the first target we simply apply binary search on graphs. At every
iteration we maintain a candidates' set $S$ (initially $S=V$). We compute a
median $v$ of $S$ (with respect to the potential $\Gamma $ of Section~\ref%
{our-potential-subsec}) and we query the pair $(v,S)$. If the response of
the query at $(v,S)$ is vertex $u\in N(v)$ then we update the candidates'
set as $S\cap N(v,u)$. We know that there is at least one target in the
updated set $S$ and that the size of the candidates' set decreased by a
factor of at least $2$ (cf. Theorem~\ref{thm:approx-gamma-potential}). Thus,
after at most $\log n$ restricted-set queries we end up with a candidates'
set of size 1 that contains one target.

We repeat this procedure for another $|T|-1$ times to detect all remaining
targets of $T$,as follows. Assume that we have already detected the targets $%
t_{1},t_{2},\ldots ,t_{i}\in T$. To detect the next target of $T$ we
initially set $S=V\setminus \{t_{1},t_{2},\ldots ,t_{i}\}$ and we apply the
above procedure. Then, after at most $\log n$ restricted-set queries we
detect the next target $t_{i+1}$. Thus, after at most $|T|\cdot \log n$
restricted-set queries in total we detect all targets of $T$.
\end{proof}

\section{Conclusions\label{conclusions-sec}}

This paper resolves some of the open questions raised by Emamjomeh-Zadeh et
al.~\cite{EKS16} and makes a first step towards understanding the query
complexity of detecting two targets on graphs. Our results provide evidence
that different types of queries can significantly change the difficulty of
the problem and make it from almost trivial impossible to solve.

The potential $\Gamma$ we introduced in this paper 
has several interesting properties that have not yet been fully explored. As we mentioned in the paper, 
just knowing the value $\Gamma_S(v)$ for a vertex $v$ directly provides enough information to 
quantify the ``progress'' a direction query can make by querying vertex $v$, 
without the need to know the values $\Gamma_S(u)$ for any other vertex $u\neq v$. 
This property of $\Gamma$ may be exploited to provide computationally more efficient algorithms 
for detecting one target; an algorithm might only need to compute $\Gamma_S(v)$ 
for all vertices $v$ lying within a wisely chosen subset such that one of these vertices 
is an approximate median. 
Of course, this approach cannot break the $\log n$ lower bound on the number of queries 
needed to detect the target (e.g.~in the path of $n$ vertices), but it could potentially improve the
computational complexity of the detection algorithm. Furthermore, the potential $\Gamma$ 
might be a useful tool for deriving an optimal number of queries for classes of
graphs other than trees, since every exact median of $\Gamma$ separates the graph 
into roughly equal subgraphs. 
By resolving an open question of~\cite{EKS16} we proved that, 
assuming that a query directs to a path with an approximately shortest path to the (single) target $t$, 
\emph{any} algorithm requires $\Omega(n)$ queries to detect $t$. 
It remains open to specify appropriate special graph classes (or other special conditions) 
that allow the detection of $t$ using a polylogarithmic number of such approximate-path queries.

For the setting where two, or more, targets need to be detected there is a
plethora of interesting questions. We believe that the most prominent one is to
derive lower bounds on the number of queries needed to detect both targets 
in the \emph{biased} setting. Can the number of queries be improved to $O(\log n)$, or 
$O(\log n \cdot poly \log \log(n))$? We have preliminary results that suggest a lower bound of 
$\log n \log\log n$ bound for a special type of algorithms, however a general lower bound
seems to require new techniques.
Another intriguing question is to find the minimal requirements a query
has to satisfy in order to detect even one target in the \emph{unbiased} setting. 
Furthermore, in the \emph{biased} setting, it is not completely clear whether all our assumptions 
in the statement of Theorem~\ref{thm:two-targets-bounded} are necessary to prove its correctness; 
however we believe they are. 
In particular, can we get in Theorem~\ref{thm:two-targets-bounded} an upper bound of $O(\Delta \log^2 n)$ 
biased queries for detecting the second target, 
if we assume that, whenever a query has chosen to direct to a specific target (with a biased probability), 
it directs to an \emph{adversarially} chosen correct answer? 
Is the dependence on $\Delta$ necessary, even if we assume (as in Theorem~\ref{thm:two-targets-bounded}) 
that a query randomly chooses among the correct answers?


\end{document}